\definecolor{light-gray}{gray}{0.9}
	\newcommand{\figref}[1]{Figure~\ref{#1}}
	\newcommand{\lemref}[1]{Lemma~\ref{#1}}
	\newcommand{\thmref}[1]{Theorem~\ref{#1}}
	\newcommand{\ie}{i.e.,\xspace}
	\newcommand{\eg}{e.g.,\xspace}
	\newtheorem{lemma}{Lemma}%
	\newtheorem{theorem}{Theorem}%
	\newtheorem{corollary}{Corollary}%
	\newcommand\eat[1]{}
	\newlength{\wordlength}
	\newcommand{\eqclass}[2][]{\ifthenelse{\equal{#1}{}}{[#2]}{[#2]_{\sim_{#1}}}}
	\newcommand{\Pref}[1][]{
		\ifthenelse{\equal{#1}{}}{\mathrel R}{\mathop{R_{#1}}}
	}                                          
	\newcommand{\sPref}[1][]{                  
		\ifthenelse{\equal{#1}{}}{\mathrel P}{\mathop{P_{#1}}}
	}                                          
	\newcommand{\Indiff}[1][]{                 
		\ifthenelse{\equal{#1}{}}{\mathrel I}{\mathop{I_{#1}}}
	}
	\newcommand{\prefset}[1][]{\ifthenelse{\equal{#1}{}}{\mathcal{R}}{\mathcal{R}_{#1}}}
\newcommand{\nbh}[1][]{
	\ifthenelse{\equal{#1}{}}{\nu}{\nu(#1)}
}
\newcommand{\cstr}[1][]{
	\ifthenelse{\equal{#1}{}}{\mathscr S}{\cstr(#1)}
}
\newcommand{\choice}[1][]{
	\ifthenelse{\equal{#1}{}}{\mathit{C}}{\choice(#1)}
}
\tikzset{
  treenode/.style = {align=center, inner sep=0pt, text centered,
    font=\sffamily},
  arn_n/.style = {treenode, circle, white, font=\sffamily\bfseries, draw=black,
    fill=black, text width=1.5em},
  arn_r/.style = {treenode, circle, red, draw=red, 
    text width=1.5em, very thick},
  arn_x/.style = {treenode, rectangle, draw=black,
    minimum width=0.5em, minimum height=0.5em}
}
\newcommand{\studentproposing}{student-proposing\xspace} 
\newcommand{\collegeproposing}{college-proposing\xspace} 
\begin{document}

\title{On the Susceptibility of the \\Deferred Acceptance Algorithm}
	\author{Haris Aziz} \ead{haris.aziz@nicta.com.au}
	\address{NICTA and UNSW, Australia}
	
	\author{Hans Georg Seedig} \ead{seedig@in.tum.de}
	\address{TU M\"unchen, Germany}
	
	\author{Jana Karina von Wedel} \ead{wedel@ma.tum.de}
	\address{TU M\"unchen, Germany}



\begin{abstract}
	The Deferred Acceptance Algorithm (DAA) is the most widely accepted and used algorithm to match students, workers,  or residents to colleges, firms or hospitals respectively. In this paper, we consider for the first time, the complexity of manipulating DAA by agents such as colleges that have capacity more than one. For such agents, truncation is not an exhaustive strategy. We present efficient algorithms to compute a manipulation for the colleges when the colleges are proposing or being proposed to. We then conduct detailed experiments on the frequency of manipulable instances in order to get better insight into strategic aspects of two-sided matching markets. Our results bear somewhat negative news: assuming that agents have information other agents' preference, they not only often have an incentive to misreport but there exist efficient algorithms to find such a misreport.
\end{abstract}

	\begin{keyword}
	 	Two-sided matching
		\sep Computational Complexity 
			\sep and Manipulation\\
		
		\emph{JEL}: C62, C63, and C78
	\end{keyword}

\maketitle

\section{Introduction}
\label{section:Introduction}

Deciding which students get admitted to which college, which workers get jobs at which firm, matching medical students to hospitals for their internship or residency -- all these are examples of \emph{two-sided matching markets}. In these markets, agents from two disjoint sets are paired with each other using a matching mechanism that takes into account their reported \emph{preferences} and \emph{capacities}. 
The study of such matching market mechanisms is an important field in microeconomics. Alvin Roth and Lloyd Shapley were awarded the Nobel memorial prize in economic science ``for the theory of stable allocations and the practice of market design''. Within market design, the \emph{deferred acceptance algorithm} (DAA)---sometimes referred to as the \emph{Gale-Shapley algorithm} \citep{GaSh62a}---is one of the most widely used centralized matching market mechanism.
The main focus of this paper is on DAA applied on the college admission market in which students are matched to colleges.

One of the main reasons for the popularity of DAA for real-world matching markets, is that it always yields a \emph{stable matching}. A matching is stable if no agent is assigned a partner that is unacceptable to him or her and no two agents, that are not matched with each other, prefer each other to (one of) their respective matching partners. 

However, a well-known drawback of DAA is that it is not strategyproof, \ie there exist instances for which an agent has an incentive to misreport his preferences. This does not stem from a design flaw of DAA but from the fact that no stable matching algorithm is strategyproof~\citep[see ][p. 622f]{Roth82a}.
Two redeeming factors regarding this manipulability could be \emph{(i)} that finding a beneficial manipulation strategy for a given agent, if one exists, could be \emph{computationally intractable} or  \emph{(ii)} that only an \emph{insignificantly small fraction} of instances of matching markets can be beneficially manipulated. 
The following two research questions naturally arise: \emph{Is the problem of finding a beneficial manipulation for a given agent computationally feasible? Is a significant fraction of instances of matching markets beneficially manipulable?}


For manipulations by agents that can only be matched to a single other agent, these questions have already been partially answered in the work by \citet{ImMa05a}, \citet{KoPa09a} and \citet{TST01a}.
The goal of this paper is to examine strategic issues for  agents that may have capacity more than one, \eg colleges in college admission markets.
These strategic issues can have profound effects on the assessment of stable matching procedures since the strong normative properties of DAA such as stability only hold as long as agents report their truthful preferences. If agents are frequently in a position to beneficially misreport their preferences and if finding such a misreport is computationally feasible, then the normative properties of DAA outcomes may be compromised.

We show that the problem of finding a beneficial manipulation for a given college is computationally feasible for both variants assuming they have knowledge of the preferences of all students and colleges.  
In particular, we prove that under \collegeproposing DAA, for a given college, a beneficial manipulation or the fact that no such manipulation exists can be determined in polynomial time. The same problem under \studentproposing DAA is shown to be fixed-parameter tractable w.r.t. the capacity of the college. 
Based on experimental results obtained by simulating \collegeproposing and \studentproposing DAA%
, we further conclude that a significant fraction of instances of matching markets can be beneficially manipulated by at least one college. 

%
%
%

\section{Related Work}
\label{section:relatedWork}
\citet{Roth82a} showed that there exists no stable and strategyproof matching mechanism for two-sided matching markets. 
This impossibility theorem inspired several avenues of research relevant to this paper. 
Some researchers tried to determine the relevance of this result for real-world applications of DAA by investigating which \emph{fraction of instances} of matching markets is in fact beneficially manipulable~\citep{ImMa05a,KoPa09a,RoPe99a,TST01a}. 
We complement this existing work by considering manipulation by colleges via misrepresenting preferences in markets of the college admission type in which agents are required to submit \emph{complete preference lists}. 
Regarding simulations, we use Mallows mixture models instead of uniform distributions to generate preferences whose structure is arguably more realistic.
An orthogonal direction of research is that of identifying domains for which a stable and strategyproof mechanism \emph{does} exist. This can be done by placing \emph{restrictions on the admissible preferences}~\citep{AlBa94a,Akah14a,Koji07a,Kest11a}. 
DAA has been shown to be \emph{strategyproof} for the proposing-side if all proposing agents have capacity one \citep{DuFr81a,RoSo90a}.
Proposees in general can however potentially manipulate the outcome of DAA and the same holds for proposing agents with capacity greater than one, \eg colleges in college admission markets.



There has also been work on algorithms to compute optimal preference reports. 
\citet{TST01a} consider manipulation by an individual women $w$ in one-to-one markets when all other agents state their preferences truthfully and show that a beneficial manipulation by $w$, if one exists, can be found in polynomial time both in the case in which women can truncate their preference list as well as in the case in which they need to submit complete preference lists.

The computational complexity problem of computing a manipulation has already been studied in great depth for voting rules~\citep[see \eg][]{FHH10a,FaPr10a}.
For one-to-one matching markets, \citet{KoMa10b} and \citet{SuYa11a} considered the complexity of computing beneficial manipulation for \emph{groups} of agents.
In general agents may be allowed to express some lesser preferred agents as \emph{unacceptable}. Stability of matching requires that no agent is matched to an unacceptable agent. An agent can implicitly express some agent $j$ as unacceptable by not including $j$ in its preference list.
In view of this allowance, an agent may misreport his preference by \emph{blacklisting/dropping} an agent but not blacklisting a lesser preferred agent or by \emph{truncating} his preference list by not including some least preferred acceptable agents in the list. \citet{JKK12a} have recently shown that \emph{truncation} of true preferences is an exhaustive manipulation strategy for agents with capacity one. This result implies that the problem of finding a beneficial manipulation for a given student $s$ under college-proposing DAA is solvable in polynomial time by checking 
each truncated preference list.  
 In many real-world matching markets agents are required to submit complete preference lists, which prevents manipulation via blacklisting or truncation~\citep{TST01a}. This is a reasonable assumption as seen in the context of  college admissions: as long as all the applying student satisfy some minimum requirements, they cannot be deemed unacceptable by any college. Colleges may want each slot should be filled, preferably by good students, but by any student if necessary. 
 If agents are not allowed to express truncated preference lists, it effectively means that agents are not allowed to express any agent as unacceptable. 
 Agents may still manipulate in these markets by submitting preference lists that are complete but do not reflect their true preferences. We call manipulation via submitting such a falsified preference list \emph{misrepresenting preferences}.
 Even if agents are aware of the preferences of the other agents, misrepresenting preferences at least guarantees the agent a match whereas dropping or truncation strategies by the agent may leave the agent empty-handed in uncertain scenarios.
 In many-to-one and many-to-many markets the outcome of a stable matching procedure does not only depend on the submitted  preferences, but also on the reported capacities. It was first shown by \citet{Sonm97a} that \emph{underreporting capacities}, i.e. claiming a capacity $q'<q$ where $q$ is the capacity of a college, can lead to a beneficial manipulation. 
 When indifferences in students preferences are allowed, colleges can also benefit from \emph{overreporting capacities}, i.e. claiming a capacity $q'>q$ where $q$ is the capacity of a college.


While previous work is focussed on manipulation by agents with capacity one, we investigate the computational complexity of finding a beneficial manipulation for a given \emph{college} in college admission markets. 

 \section{Model}
\label{section:preliminaries}

Two-sided matching markets contain two disjoint sets of agents~$A$ and~$B$. Every agent~$i$ has preferences~$>_i$ over the agents on the other side of the market. If $a >_i a'$ then agent $i$ prefers being matched with $a$ over being matched with $a'$. We call the set~$>=(>_i)_{i\in A\cup B}$ consisting of the preference relations of all agents a \emph{preference profile}. Throughout this paper we assume all agents to have \emph{strict, complete, and transitive preferences} unless indicated otherwise. 
The goal in these markets is to find a mapping~$\mu$ from agents of one set to agents of the other set, called a \emph{matching}, with $a\in \mu(b) \Leftrightarrow b\in \mu(a)$. 

Famous examples of two-sided matching markets are the marriage problem and the college admission problem as first introduced by \citet{GaSh62a}. 
In the marriage problem each man/woman is matched with at most one woman/man. In such a case, we speak of a \emph{one-to-one} matching market.
In a market $\left(C,S,q,>\right)$ of the college admission type, which we focus on in this paper, we have a set~$C$ of colleges and a set~$S$ of students. Generic representatives of these groups are denoted by~$c$ and~$s$, respectively. 
Every student may only be matched to one college whereas every college~$c$ is equipped with a capacity (or quota) value~$q(c)\geq 1$ denoting the number of students it may be matched with. Such a market is referred to as a \emph{many-to-one} matching market.
There also exist markets for \emph{many-to-many matching}, \eg for matching firms and workers, each of which may work for several firms. However, these markets will not be considered in detail in this paper.

When dealing with many-to-one (or many-to-many) matching markets, we need to extend the preference relation~$>_c$ of a college~$c$ to preferences over \emph{sets} of students in a meaningful way. To this end, we assume \emph{responsive preferences} throughout this paper: for all $S' \subset S$ with $|S'| < q(c)$ and $s,s' \in S\setminus S'$ it holds that
\begin{align*}
	S' \cup \{s\} >_c& S' \cup \{s'\} \text{ if and only if $s >_c s'$ and}\\
	S' \cup \{s\} >_c& S'.
\end{align*}
Note that a college $c$ only prefers adding an additional student to its set of matches over not doing so if its capacity $q(c)$ is not yet filled. Any matching that assigns more than $q(c)$ students to a college $c$ is not valid.
Thus each agent expresses preferences over sets of agents by \emph{only} expressing preferences over agents.
When there is incomparability between two sets of allocations with respect to responsiveness, we will assume that the agent is indifferent between them. Alternatively, even if an agent is not indifferent among such sets, 
when we consider \emph{beneficial manipulations}, we will only consider those in which the result of manipulation dominates the original allocation with respect to the responsive relation. For example, if the  preferences are $a>b>c>d$ and if $\{b,c\}$ is the match, then $\{a,d\}$ cannot be the outcome of a beneficial manipulation since $\{a,d\}$ is not better outcome than $\{b,c\}$ with respect to the responsive relation.


\section{Deferred Acceptance\\Algorithms}
\label{section:DAA}
The \emph{deferred acceptance algorithm} (DAA) was first officially introduced in 1962 by \citeauthor{GaSh62a}.\footnote{\citeauthor{Roth84b} reported later that DAA has already been in use by the National Resident Matching Program (NRMP) since 1951 for matching medical students to hospitals for their medical internship or residency in the United States. There, DAA naturally evolved through a trial and error process that is described in \citep{Roth84b}.}
Today, DAA is applied on numerous real world problems (see e.g., http://www.matching-in-practice.eu).

An execution of DAA consists of several rounds. In each round, agents from one side of the market, the \emph{proposers}, propose to agents of the other side, the \emph{proposees}, to be matched with them. Proposees can tentatively accept a proposal, but may revoke their acceptance should they in a later round receive a proposal from an agent they prefer to their tentative match. Their final decisions regarding their matching partners are therefore \emph{deferred} until the final round.

Depending on whether students or colleges are considered the ``proposing'' side during the algorithm, there are two variants of DAA for the college admission market, \ie \emph{\studentproposing} and \emph{\collegeproposing} DAA.

\paragraph{Student-proposing DAA}
\begin{enumerate}
\item Each student applies to his favorite college.
\item Each college rejects all applications from students that are unacceptable to it. 
If a college received at most $q$ applications from acceptable students so far, all those students are put on the colleges waiting list. Otherwise the college puts its favorite $q$ students among all applicants on the waiting list and rejects all remaining ones. 
\item Each student that was rejected in the previous step applies to his favorite among the colleges he or she has not yet applied to.
\item Steps 2 and 3 are repeated until it holds for all students that they were either not rejected in the previous step or already applied to all colleges acceptable to them.
\item Each college admits all students on its waiting list.
\end{enumerate}


\normalsize
By the use of the waiting list, all final admissions are deferred to the end of the procedure. 
The algorithm can easily be adapted to colleges making offers to students.

\paragraph{College-proposing DAA}
\begin{enumerate}
\item Each college makes offers to its $q$ favorite students.
\item Each student keeps the offer of his or her favorite acceptable college among those that made an offer to him/her so far (if such a college exists) and rejects all others.
\item Each college that was rejected in the previous step makes offers to its $k$ favorite among the students it has not yet made offers to, where $k$ is the number of students that rejected that college in the previous step. If there are less than $k$ students left that the college did not yet make offers to, it makes offers to all those students.
\item Steps 2 and 3 are repeated until it holds for all colleges that they were either not rejected in the previous step or already made offers to all students acceptable to them.
\item Each student gets admitted by the college whose offer he kept.
\end{enumerate}
\normalsize

A result regarding DAA in general that is frequently used when arguing about manipulation is that, if preferences are strict, the order in which proposers are considered, \ie make their proposals, does not affect the outcome of DAA \citep{MaGo89a}. This means that given strict preferences, each variant of DAA has a unique outcome.
In many real-world matching markets agents are required to submit complete preference lists, which prevents manipulation via blacklisting or truncation. This effectively means that agents are not allowed to express any agent as unacceptable. 
Agents may still manipulate in these markets by submitting preference lists that are complete but do not reflect their true preferences. We call manipulation via submitting such a falsified preference list \emph{misrepresenting preferences}.
Even if agents are aware of the preferences of the other agents, misrepresenting preferences at least guarantees the agent a match whereas dropping or truncation strategies by the agent may leave the agent empty-handed in uncertain scenarios.

\section{Manipulation of student-proposing DAA}
 \label{section:o2mComplexity}
We show that it can be checked in polynomial time whether a given college $c \in C$ can misrepresent its preferences and obtain a better outcome. 
Let $S, C$ be the agents of a many-to-one matching market and $\mu$ denote the matching that is obtained by applying \studentproposing DAA to this market. First it is shown in that whenever the market admits a beneficial manipulation $\mu'$ by a college $c \in C$, then there also exists a beneficial manipulation $\mu''$ by $c$ s.t. $|\mu(c) \setminus \mu''(c)|=1$. We then use this fact to reduce the problem to that of finding a beneficial manipulation via misrepresentation for a given proposee in a one-to-one matching market, which was shown by \citet{TST01a} to be solvable in polynomial time.
We describe some key concepts that we will use for our algorithmic result.
\paragraph{New Proposals}
In order for a college $c$ to potentially benefit from a manipulation, it must during the manipulated execution of DAA receive proposals from students that do not propose to it during the execution of DAA under truthful preferences and that it prefers to some of its matches under $\mu$. Since all agents but $c$ are assumed to state their preferences truthfully, any such \emph{new proposals} must somehow be caused by $c$ misrepresenting its preferences. We formally define a new proposal as follows:
Let $s \in S, c \in C$ and $\mu'$ be a manipulation s.t. $s$ proposes to $c$ during the manipulated execution of DAA leading to $\mu'$. We call the proposal from $s$ to $c$ a \emph{new proposal} if $s$ does not propose to $c$ during the execution of DAA under truthful preferences. 

\paragraph{Rejection in favor}
The only way a misrepresentation of preferences by $c$ can cause new proposals, is if it leads to $c$ rejecting some student(s) $t$ it does not reject during the truthful run of DAA. Since the only students that $c$ receives proposals from during the truthful run of DAA and does not reject are those it is matched with, we obtain $t \in \mu(c)$. Therefore, after being rejected by $c$, $t$ will make one or more new proposals to colleges $c'$ s.t. $c >_t c'$, if such colleges exist. 

If $t$ is accepted by some such $c'$, this college might reject another student $s'$ \emph{in favor} of $t$.
Formally, let $s', t \in S$ and $c' \in C$. We say that \emph{$c'$ rejects $s'$ in favor of $t$} if both $s'$ and $t$ propose to $c'$ during the execution of student-proposing DAA and one of the following holds: 
	\begin{enumerate}
	\item At the time at which $t$ proposes to $c'$, $c'$ has $q(c')$ temporary matches. Student $s'$ is the least preferred, according to $>_{c'}$, among those temporary matches and $t >_{c'} s'$. Therefore $c'$ rejects $s'$ in order to be able to accept $t$.
	\item At the time at which $s'$ proposes to $c'$, $c'$ has $q(c')$ temporary matches and rejects $s'$. Student $t$ is among those temporary matches. If $s'$ had proposed to $c'$ before $t$, 1. would have applied. Therefore $c'$ would not have rejected $s'$ if $t$ had not proposed to it and everything else stayed the same. 
	\end{enumerate}

\paragraph{Important Sets of Students}
A student never proposes to the same college twice. Therefore it holds for any student $t \in \mu(c)$ that $c$ rejects in order to manipulate DAA, that $c$ is not matched with $t$ under $\mu'$. 
Let $T= \mu(c) \setminus \mu'(c)$ denote the set of students that $c$ is matched with under $\mu$ and rejects in order to manipulate the outcome of DAA. 
Let $S^*= \mu'(c) \setminus \mu(c)$ denote the set of students that $c$ is matched with under $\mu'$, but not under $\mu$. 
According to definition of responsive preferences, in order for $\mu'$ to be a beneficial manipulation for $c$ w.r.t. $\mu$, it must hold that for each $t \in T$ there exists a distinct $s \in S^*$ s.t. $s >_c t$. We therefore obtain $|T|=|S^*|$.
Under responsive preferences, a college $c$ does not reject any proposals from acceptable students $s$, i.e. $s >_c \emptyset$, as long as its capacity $q(c)$ is not yet filled. Therefore if $c$ rejects an acceptable student $s$, it must have received $q(c)$ proposals from students $s'$ s.t. $s' >_c s$, i.e. an acceptable student may only be rejected in favor of another student. Note that this implies that only colleges $c$ with $|\mu(c)|=q(c)$ who receive more than $q(c)$ proposals during the truthful run of DAA can manipulate the outcome of student-proposing DAA via misrepresenting their preferences, as formally stated in the following lemma. 

\begin{lemma}\label{th:collegeswithhighq}
Let $S, C$ be the agents of a many-to-one matching market in which all colleges are required to submit complete preference lists and $\mu$ the matching obtained by applying \studentproposing DAA to this market. Given responsive preferences over sets, only colleges $c \in C$ s.t. $|\mu(c)|=q(c)$ who receive more than $q(c)$ proposals during the truthful run of DAA can manipulate the outcome of student-proposing DAA via misrepresenting preferences.
\end{lemma}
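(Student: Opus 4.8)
The plan is to prove the statement in contrapositive form, by isolating the genuinely binding requirement---that $c$ receive \emph{strictly more} than $q(c)$ proposals during the truthful run---and then observing that this condition in turn forces $|\mu(c)|=q(c)$. Concretely, I would first show: if $c$ receives at most $q(c)$ proposals over the entire truthful execution, then \emph{no} misrepresentation by $c$ can change the outcome of \studentproposing DAA at all, so in particular none can be beneficial.

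The heart of the argument is a coupling between the truthful run and the manipulated run driven by a falsified complete list $>_c'$. I would argue by induction on the rounds that the two runs are step-for-step identical. The base case is immediate, since in the first round every student (all of whom report truthfully) applies to the same favourite college in both runs. For the inductive step, assume the runs have agreed through round $r$; then $c$ has received exactly the same set of applicants so far in both runs, and since the truthful run delivers at most $q(c)$ applicants to $c$ in total, this common set has size at most $q(c)$. The key observation is that $c$'s reported order $>_c'$ is irrelevant here: a college holding at most $q(c)$ acceptable applicants retains all of them on its waiting list and rejects none, regardless of how it ranks them, because the lists are complete and hence every student is acceptable. Thus $c$ behaves exactly as under truthful preferences (it rejects nobody), every other college is truthful and so also acts identically, and the same students are rejected in round $r$; consequently the same proposals are issued in round $r+1$. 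The induction then yields $\mu'=\mu$, so $c$ gains nothing. This is precisely the place where the already-established fact---that new proposals to $c$ can only be triggered by $c$ first rejecting one of its own matches $t\in\mu(c)$---is reflected: with spare capacity $c$ never rejects, the cascade underlying the ``rejection in favour'' phenomenon never starts, and no new proposal ever reaches $c$.

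It remains to record that the ``more than $q(c)$ proposals'' condition implies $|\mu(c)|=q(c)$, which closes the gap between the single binding inequality and the two conditions in the statement. If $c$ receives strictly more than $q(c)$ proposals during the truthful run, then at the first round in which its cumulative number of applicants exceeds $q(c)$ it fills its waiting list to exactly $q(c)$ and rejects the surplus; from that round on it keeps the best $q(c)$ students among its current list together with any newcomers, so its waiting list, and hence $|\mu(c)|$, stays equal to $q(c)$. Combining the two implications, every college that can beneficially manipulate satisfies both $|\mu(c)|=q(c)$ and receives more than $q(c)$ truthful proposals, as claimed.

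I expect the main obstacle to be making the coupling argument airtight, namely ruling out that misrepresentation could covertly raise the number of proposals arriving at $c$ before $c$ has itself deviated. The induction resolves this: any divergence between the two runs must originate from a round in which $c$ actually rejects a student, and that can happen only once $c$ holds more than $q(c)$ proposals---an event that never occurs on the truthful trajectory under the hypothesis. One should also be slightly careful to read ``proposals received during the truthful run'' as the total set of distinct students who ever apply to $c$, so that an overall bound of $q(c)$ indeed bounds the per-round accumulated count used in the induction.
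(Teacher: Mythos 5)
Your proof is correct and follows essentially the same route as the paper's: both argue (in contrapositive) that, with complete preference lists and responsive preferences, a college receiving at most $q(c)$ proposals in the truthful run must accept every proposal regardless of its reported order, so it cannot behave differently under any misreport and hence cannot change the outcome. The only difference is that you make explicit two points the paper's terse proof leaves implicit---the round-by-round coupling showing the manipulated run cannot diverge before $c$ itself rejects someone, and the fact that receiving more than $q(c)$ proposals forces $|\mu(c)|=q(c)$---which makes your write-up more airtight but does not change the underlying argument.
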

\begin{proof}
Assume that $c \in C$ can manipulate DAA via misrepresenting its preferences while all other agents state their preferences truthfully. Then during the manipulated execution of DAA, $c$ must accept a student it does not accept during the truthful run of DAA  or reject a student it does not reject during the truthful run of DAA.

Since all students are effectively acceptable to $c$, under responsive preferences $c$ \emph{must} accept all proposals from students as long as its capacity is not yet filled. Therefore if $|\mu(c)|<q(c)$ and/or $c$ receives $\leq q(c)$ proposals during the truthful run of DAA, $c$ may not reject any of these proposals and there exist no students that $c$ receives a proposal from and does not accept. Since by assumption $c$ can manipulate the outcome of DAA, we obtain that $|\mu(c)|=q(c)$ and $c$ receives more than $q(c)$ proposals during the execution of DAA under truthful preferences.
\end{proof}

In order to be able to reject a student $t \in T$, $c$ must accept $q(c)$ proposals from students $s \neq t$. Since any new proposals only occur after $c$ rejected some $t \in T$, this implies that $c$ must accept at least one proposal from a student $u$ s.t. $u \not \in \mu(c)$ and $u$ also proposes to $c$ during the truthful run of DAA. Therefore $\forall t \in T: u <_c t$ according to the truthful preferences. 

Let $U$ denote the set of students $u$ that $c$ temporarily accepts in order to reject students $t \in T$ in favor of $u$.
Since $c$ can only reject a single student $t \in T$ in favor of each student $u \in U$, we obtain $|U|=|T|$ and thereby also $|U|=|S^*|$. Due to students $u$ being less preferred than students $t$ according to the truthful preferences, $c$ may not be matched with any $u \in U$ in a beneficial manipulation: 
$U \cap \mu'(c) = \emptyset.$
Therefore $c$ must reject all students $u \in U$ again upon accepting proposals from students $s \in S^*$. Since $|U|=|S^*|$ this means that whenever $c$ receives a new proposal from a student $s \in S^*$, a student $u \in U$ is rejected in its favor.

\paragraph{Chain of Rejections}
Since $c$ is matched with no student $s \in S^*$ under $\mu$, but it holds for all those students that there exists at least one student $t \in \mu(c)$ that $s$ is preferred to, no student $s \in S^*$ proposes to $c$ during the truthful run of DAA. Therefore all proposals from some $s \in S^*$ to $c$ that are made during the manipulated run of DAA are new proposals.

As stated above, since $c$ is assumed to be the only manipulating college, all new proposals must, directly or indirectly, be caused by $c$ rejecting students $t \in T$. Obviously the rejection of a student $t \in T$ can not immediately cause any new proposals to $c$. Upon being rejected by $c$, $t$ will propose to colleges $c' <_t c$ in order of his or her preferences until $t$ is either accepted by some such $c'$ or proposed to all colleges.

If $t$ is accepted by such a college $c'$, a student $s'$ might be rejected in favor of $t$. According to definition of `rejecting in favor', 
$c'$ does not reject $s'$ if $t$ does not propose to it while everything else stays the same. Since $t$ does not propose to $c'$ during the truthful run of DAA, we obtain that $s'$ is not rejected by $c'$ during the truthful run. Therefore any proposals made by $s'$ after being rejected by $c'$ are new proposals. 

Should it hold that $s' \in S^*$ and $s'$ is not accepted by any $c''$ s.t. $c' >_{s'} c'' >_{s'} c$, then the new proposal by $s'$ to $c$ is caused by: 
\begin{enumerate}
	\item $c$ rejecting $t$, who then proposes to $c'$
	\item $c'$ rejecting $s'$ in favor of $t$, who then proposes to $c$
	\end{enumerate}
We say that $c$ rejecting $t$ triggered a \emph{chain of rejections} leading to $s'$ proposing to $c$.
Clearly such a chain of rejections could include more steps. Student $s'$ might instead be accepted by some college $c''$ who rejects another student $s''$ in its favor and so on.

%
\paragraph{Illustrative Example}
Before we proceed with our argument, consider the following example of a many-to-one matching market with sets
\begin{align*}
S =& \{s_1,s_2,s_3,s_4,t_1,t_2,t_3,u_1,u_2,u_3\}\\
C =&\{c,c_1,c_2,c_3,c_4\}
\end{align*}
and the preference profile depicted in \figref{fig:example-profile}, where ``$?$'' indicates that the preferences over the agents that are not mentioned explicitly are irrelevant.

\begin{figure}[tb]
\center
\scalebox{.9}{
\begin{tabular}{c c c c c c c c c c}
$s_1$ & $s_2$ & $s_3$ & $s_4$ & $t_1$ & $t_2$ & $t_3$ & $u_1$ & $u_2$ & $u_3$\\
\midrule
$c_1$ & $c_2$ & $c_3$ & $c_4$ & $c$ & $c$ & $c$ & $?$ & $?$ & $?$\\
$c_2$ & $c_3$ & $c$ & $c$ & $c_1$ & $c_2$ & $c_3$ & $?$ & $?$ & $?$\\
$?$ & $c$ & $?$ & $?$ & $?$ & $c_4$ & $?$ & $?$ & $?$ & $?$\\
$?$ & $?$ & $?$ & $?$ & $?$ & $?$ & $?$ & $?$ & $?$ & $?$\\
$?$ & $?$ & $?$ & $?$ & $?$ & $?$ & $?$ & $c$ & $c$ & $c$\\
& & & & & & & & &\\
& & & & & & & & &\\
& & & & & & & & &\\
& & & & & & & & &\\
& & & & & & & & &
\end{tabular}%
\hspace{.5cm}%
\begin{tabular}{c c c c c}
$c$ & $c_1$ & $c_2$ & $c_3$ & $c_4$\\
\midrule
$s_4$ & $t_1$ & $s_1$ & $t_3$ & $t_2$\\
$t_3$ & $s_1$ & $t_2$ & $s_2$ & $s_4$\\
$s_2$ & $?$ & $s_2$ & $s_3$ & $?$\\
$t_1$ & $?$ & $?$ & $?$ & $?$\\
$s_3$ & $?$ & $?$ & $?$ & $?$\\
$t_2$ & $?$ & $?$ & $?$ & $?$\\
$s_1$ & $?$ & $?$ & $?$ & $?$\\
$u_1$ & $u_1$ & $u_1$ & $u_1$ & $u_1$\\
$u_2$ & $u_2$ & $u_2$ & $u_2$ & $u_2$\\
$u_3$ & $u_3$ & $u_3$ & $u_3$ & $u_3$
\end{tabular}
}
\label{fig:example-profile}
\caption{Example of a preference profile in a many-to-one matching market.}
\end{figure}

Given capacities $q(c)=3$ and $q(c_i)=1$ for all $i \in \{1,...,4\}$ \studentproposing DAA results in
\[\mu = \left(\begin{matrix}s_1 & s_2 & s_3 & s_4 & t_1 & t_2 & t_3 & u_1 & u_2 & u_3\\c_1 & c_2 & c_3 & c_4 & c & c & c & \emptyset & \emptyset & \emptyset\end{matrix}\right).\]
If $c$ misrepresents its preferences as
\[s_4 >_c s_2 >_c s_3 >_c u_1 >_c u_2 >_c u_3 >_c s_1 >_c t_3 >_c t_1 >_c t_2\]
\studentproposing DAA instead gives
\[\mu' = \left(\begin{matrix}s_1 & s_2 & s_3 & s_4 & t_1 & t_2 & t_3 & u_1 & u_2 & u_3\\c_2 & c & c & c & c_1 & c_4 & c_3 & \emptyset & \emptyset & \emptyset\end{matrix}\right).\]
All colleges strictly prefer $\mu'$ to $\mu$, while $u_1$, $u_2$ and $u_3$ are indifferent between the two matchings and all remaining students strictly prefer $\mu$ to $\mu'$.

In this example we have $T=\{t_1,t_2,s_3\}, S^*=\{s_2,s_3,s_4\}$, and $U=\{u_1,u_2,u_3\}$.

All proposals by $u_1, u_2, u_3$ as well as proposals from all remaining students to their most preferred college occur during both the truthful and the manipulated execution of DAA. Any additional proposals made during the manipulated run of DAA are new proposals. Those new proposals are illustrated in \figref{fig:rejectionchains}.
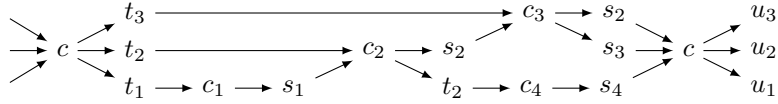
\begin{figure}[tb]
	\def\nd{.5cm}
\label{fig:rejectionchains}
\center
			\scalebox{1}{
\begin{tikzpicture}[node distance=\nd]

\node[draw=none] (in2) at (0,0) {};
\node[draw=none, yshift=-\nd] (in1) at (in2) {};
\node[draw=none, yshift= \nd]  (in3) at (in2) {};
\node[right=of in2			 ] (c)  {$c$};
\node[right=of c ,yshift=-\nd] (t1) {$t_1$};
\node[right=of c ,			 ] (t2) {$t_2$};
\node[right=of c ,yshift= \nd] (t3) {$t_3$};
\node[right=of t1,			 ] (c1) {$c_1$};
\node[right=of c1,			 ] (s1) {$s_1$};
\node[right=of s1,yshift= \nd] (c2) {$c_2$};
\node[right=of c2,			 ] (s2) {$s_2$};
\node[right=of c2,yshift=-\nd] (t22) {$t_2$};
\node[right=of s2,yshift= \nd] (c3) {$c_3$};
\node[right=of t22,			 ] (c4) {$c_4$};
\node[right=of c4,			 ] (s4) {$s_4$};
\node[right=of c3,yshift=-\nd] (s3) {$s_3$};
\node[right=of c3,			 ] (s22) {$s_2$};
\node[right=of s3,			 ] (cc) {$c$};
\node[right=of cc,			 ] (u2) {$u_2$};
\node[right=of cc,yshift=-\nd] (u1) {$u_1$};
\node[right=of cc,yshift= \nd] (u3) {$u_3$};
\draw[-latex] (in1) -- (c);
\draw[-latex] (in2) -- (c);
\draw[-latex] (in3) -- (c);
\draw[-latex] (c) -- (t1);
\draw[-latex] (c) -- (t2);
\draw[-latex] (c) -- (t3);
\draw[-latex] (t1) -- (c1);
\draw[-latex] (c1) -- (s1);
\draw[-latex] (s1) -- (c2);
\draw[-latex] (t2) -- (c2);
\draw[-latex] (c2) -- (t22);
\draw[-latex] (c2) -- (s2);
\draw[-latex] (s2) -- (c3);
\draw[-latex] (t3) -- (c3);
\draw[-latex] (t22) -- (c4);
\draw[-latex] (c3) -- (s22);
\draw[-latex] (c3) -- (s3);
\draw[-latex] (c4) -- (s4);
\draw[-latex] (s22) -- (cc);
\draw[-latex] (s3) -- (cc);
\draw[-latex] (s4) -- (cc);
\draw[-latex] (cc) -- (u1);
\draw[-latex] (cc) -- (u2);
\draw[-latex] (cc) -- (u3);
\end{tikzpicture}
}
\caption{Illustration of chains of rejections in the manipulation example. An edge from $s \in S$ to $c \in C$ indicates $s$ proposing to $c$, an edge from $c \in C$ to $s \in S$ indicates $c$ rejecting $s$ in favor of an incoming proposal.}
\end{figure}
The graph contains three paths representing chains of rejections triggered by the manipulating college $c$ rejecting some $t \in T$ and leading to some $s \in S^*$ proposing to  $c$:
\begin{enumerate}
\item $c \rightarrow t_1 \rightarrow c_1 \rightarrow s_1 \rightarrow c_2 \rightarrow t_2 \rightarrow c_4 \rightarrow s_4$
\item $c \rightarrow t_2 \rightarrow c_2 \rightarrow s_2 \rightarrow c_3 \rightarrow s_3$
\item $c \rightarrow t_3 \rightarrow c_3 \rightarrow s_2$
\end{enumerate}

\paragraph{Beneficial Manipulation by Rejecting a Single Match}
\label{section:rejectingoneenough}
We show that each chain of rejections triggered by the manipulating college $c$ rejecting some $t \in T$ leads to a distinct $s \in S^*$ proposing to $c$. Moreover, it holds for all $t \in T$ that there exists a manipulation that is achieved via $c$ misrepresenting its preferences while all other agents state their preferences truthfully, s.t. $T'=\{t\}$ and some $s \in S^*$ proposes to $c$ during the manipulated run of DAA. Using those results, we then show that whenever \collegeproposing DAA admits a beneficial manipulation by a college $c$, then it also admits a beneficial manipulation by $c$ s.t. $|T'|=1$.
\begin{lemma}
\label{eachtones}
Let $c \in C$ s.t. $c$ can beneficially manipulate DAA. Then it holds for each $t \in T$ that the rejection of $t$ by $c$ triggers exactly one chain of rejections leading to a distinct $s \in S^*$ proposing to $c$.
\end{lemma}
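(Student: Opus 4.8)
The plan is to derive the lemma from a bijection $f\colon T\to S^*$, where $f(t)$ is the student of $S^*$ that the cascade started by $c$'s rejection of $t$ delivers to $c$. Because preferences are strict, the outcome of DAA is independent of the order in which proposals are processed \citep{MaGo89a}; I would use this to serialize the analysis, treating the rejections $c$ makes on the members of $T$ (the rejections that do not occur in the truthful run) one at a time and following each triggered cascade to its end. It then suffices to prove three things: each $t\in T$ triggers exactly one chain; that chain ends with some $s\in S^*$ proposing to $c$; and distinct members of $T$ produce distinct members of $S^*$.

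First I would argue that rejecting a single $t$ produces a single, non-branching chain. Responsiveness forces displacement to be one-in/one-out: a full college receiving one new acceptable proposal it ranks above its current worst match rejects exactly that worst match, and a rejected student proposes to colleges one at a time. Thus at every instant there is a unique ``floating'' student descending its preference list in search of a place, and the cascade is a single alternating path of proposals and rejections-in-favor. In particular the rejection of $t$ triggers at most one chain.

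Next I would show the chain returns to $c$ and ends at an $s\in S^*$, and then that $f$ is a bijection. Since a student never proposes to the same college twice, the floating student strictly descends and the cascade is finite. By \lemref{th:collegeswithhighq} we have $|\mu(c)|=q(c)$, and as $\mu'$ is beneficial $c$ keeps a full quota, so $|\mu'(c)|=q(c)$; hence every place $c$ relinquishes by rejecting a $t\in T$ is made up under $\mu'$ by admitting a student it did not admit under $\mu$---necessarily a member of $S^*$---and each such admission is preceded by that student's new proposal to $c$. I would argue that the place a cascade fills is exactly the one opened at $c$ by the rejection that started it, so the cascade ends with an $s\in S^*$ proposing to $c$; with the single-path property this gives one chain per $t$, ending at one $f(t)\in S^*$. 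For the bijection, set $k=|T|=|S^*|$: each $s\in S^*$ proposes to $c$ exactly once, so there are exactly $k$ proposal events able to close a chain, and tracing any such event backward reaches a unique $T$-rejection (the only source of new proposals). Hence $f$ is onto, and since $|T|=|S^*|$ is finite, $f$ is a bijection, which is precisely the claimed ``distinct $s$.''

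The step I expect to be hardest is forcing the cascade back to $c$: excluding that the floating student is absorbed by a spare place at some other college (which would leave $c$ short a place, contradicting $|\mu'(c)|=q(c)$) or that it exhausts its list and goes unmatched. The accompanying delicacy is the bookkeeping around the temporarily accepted set $U$: when the returning $s\in S^*$ is admitted, $c$ must eject some $u\in U$, and I would need to show this $u$ generates no further new proposals but re-enters the market just as in the truthful run---where $c$ also rejects it---so that the chain genuinely terminates at $c$ and the global place-count closes up rather than spawning a new cascade.
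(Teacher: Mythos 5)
Your architecture matches the paper's proof quite closely: you obtain ``at most one chain per $t$'' from the one-in/one-out character of rejection-in-favor under responsive preferences, and you close the argument by a counting/pigeonhole step using $|T|=|S^*|$ together with the fact that every $s\in S^*$ must reach $c$ via a new proposal that traces back to some $T$-rejection. However, there is a genuine gap exactly where the paper's proof does its real work: you never establish the claim you yourself flag as the ``accompanying delicacy,'' namely that once a chain delivers some $s\in S^*$ to $c$ and $c$ consequently ejects some $u\in U$, the chain genuinely terminates --- that is, $u$'s ejection spawns no further new proposals to $c$. This is not routine bookkeeping; it is the core of the lemma. If a single chain could pass through $c$ (via the ejection of $u$) and go on to deliver a second member of $S^*$, then your map $f$ would not be well defined, your backward trace would send two distinct members of $S^*$ to the same $T$-rejection (so injectivity fails), and some other $t'\in T$ would trigger a chain delivering nothing --- exactly the situation the lemma must exclude. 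The pigeonhole cannot rescue this, because ``at most one $s\in S^*$ per chain'' is an input to the pigeonhole, not a consequence of it; likewise your quota argument ($|\mu'(c)|=q(c)$ forces the cascade back to $c$) only constrains the aggregate outcome and cannot, by itself, pin down which chain delivers which student or prevent one chain from delivering twice while another dies out.

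For comparison, the paper's proof of \lemref{eachtones} is devoted almost entirely to this missing step: after being ejected by $c$, the student $u$ is rejected by every college $c'$ with $c >_u c' >_u \mu(u)$ exactly as in the truthful run (those colleges report truthfully and rejected $u$ there as well), so $u$ eventually proposes to $\mu(u)$; and $\mu(u)$'s acceptance of $u$ cannot trigger any rejection it does not also make in the truthful run, unless $\mu(u)$ has already received a new proposal from some $s'$ --- which itself can only stem from a chain triggered by some $T$-rejection, to which any ensuing new proposals are then attributed. Supplying that argument (or an equivalent) is what your proposal still needs; everything else in it is essentially the paper's route restated in the language of a bijection.
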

\begin{proof}
According to definition of in favor, 
a college rejects at most one student in favor of another student proposing to it. The rejection of some $t \in T$ by $c$ can therefore trigger at most one chain of rejections. Once such a chain leads to some $s \in S^*$ proposing to $c$, some $u \in U$ is rejected in favor of $s$. Since $c$ also rejected all students $u \in U$ during the truthful run of DAA, the rejection of $u$ by $c$ during the manipulated run of DAA will by itself not lead to any new proposals to $c$.

After being rejected by $c$, $u$ will propose to colleges less preferred than $c$, if such colleges exist. Since all colleges $c'$ s.t. $c >_u c' >_u \mu(u)$ reject $u$ during the truthful run of DAA and all colleges but $c$ state their preferences truthfully, $u$ will also be rejected by all those colleges $c'$ during the manipulated execution of DAA. Therefore $u$ will propose to $\mu(u)$.

If $\mu(u)=\emptyset$ or $\mu(u)$ does not reject either $u$ or another student $v \in \mu(\mu(u))$ in favor of $u$, then no new proposals are triggered. Since $\mu(u)$ states its preferences truthfully, new proposals can therefore only be triggered if $\mu(u)$ previously received a new proposal itself, namely from a student $s'$ it prefers to at least one of its matches under $\mu$.

Such a new proposal can only have been caused by a chain of rejections triggered by $c$ rejecting some $t \in T$. Prior to being proposed to by $u$ and receiving any new proposals, $\mu(u)$ is either matched with $< q(\mu(u))$ students or the least preferred among its temporary matches is a student that $\mu(u)$ also rejects during the truthful run of DAA. Therefore $s'$ proposing to $\mu(u)$ does not immediately cause any new proposals.

In sum, $c$ rejecting $u$ can only lead to new proposals if a chain of rejections triggered by $c$ rejecting some $t \in T$ lead to the new proposal by $s'$ to $\mu(u)$. Then the new proposals caused by $c$ rejecting $u$ were initially triggered by $c$ rejecting $t$.

Therefore each chain of rejections triggered by $c$ rejecting some $t \in T$ can lead to at most one $s \in S^*$ proposing to $c$. Since as stated above each $t \in T$ can trigger at most one such chain and $|T|=|S^*|$, we obtain that each $t \in T$ triggers exactly one chain of rejections leading to a distinct $s \in S^*$ proposing to $c$.
\end{proof}

 Using this insight, we now show that it holds for each $t \in T$ that if $c$ misrepresents its preferences s.t. $T'=\{t\}$, some $s \in S^*$ proposes to $c$ during the such manipulated execution of DAA. Note that this does not yet imply that $c$ can \emph{beneficially} manipulate DAA via such a misrepresentation, since it does not necessarily hold that $s >_c t$.

\begin{lemma}
\label{lem:anytleadstosomes}
Let $c \in C$ s.t. $c$ can beneficially manipulate DAA, $t \in T$ and $\mu''$ be a manipulation resulting from $c$ misrepresenting its preferences s.t. $T'=\{t\}$. Then some $s \in S^*$ proposes to $c$ during the manipulated execution of DAA yielding $\mu''$.
\end{lemma}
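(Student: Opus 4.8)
The plan is to show that dropping the single student $t$ sets off one rejection chain that is indistinguishable, up to the moment it returns to $c$, from the chain $t$ initiates inside the full beneficial manipulation $\mu'$ when that manipulation is executed in a judiciously chosen order. First I would pin down how tightly $T'=\{t\}$ constrains the run yielding $\mu''$. Since $c$ submits a complete list, in order to reject $t$ the college must admit some replacement $r\notin\mu(c)$; by \lemref{th:collegeswithhighq} and the discussion of the set $U$, such an $r$ is necessarily a student that already proposed to $c$ and was rejected during the truthful run, so $r$ has already been turned down by every college it ranks above $c$. Choosing $r\in U$ (a student whose admission at $c$ therefore removes no proposals relevant to the chain), the only way in which the $\mu''$-run deviates from the truthful run is that $c$ rejects $t$. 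Because, by the definition of ``rejecting in favor'', every college rejects at most one student in favor of an incoming proposal, this rejection triggers exactly one rejection chain, whose behavior depends only on $t$ being dropped and on every college other than $c$ reporting truthfully.

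Next I would invoke order independence of \studentproposing DAA \citep{MaGo89a} to execute the full manipulation $\mu'$ in the order that first lets the chosen $u\in U$ reach $c$, then rejects $t$, and follows the resulting chain to quiescence \emph{before} any other student of $T$ is dropped. Routing $u$ to $c$ reproduces its truthful rejections above $c$ and hence leaves every college's held set unchanged, so at the instant $t$ is rejected the rest of the market is still in exactly its truthful configuration. Consequently the chain emanating from $t$ traverses the same colleges $c_1,c_2,\dots$, each using its truthful preferences, as in the $\mu''$-run, and the two chains coincide up to and including the first proposal they deliver back to $c$. Note that $c$ is never an internal vertex of the chain (a proposal that reaches $c$ ends it), so the differing internal state of $c$ across the two runs is immaterial for this comparison.

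It then remains to show the chain does reach $c$, and does so through a member of $S^*$; here I would apply \lemref{eachtones} to the chosen execution order. Being the first rejection processed, the rejection of $t$ triggers exactly one chain terminating with a distinct $s\in S^*$ proposing to $c$, and transporting this along the coincidence established above gives that the same $s\in S^*$ proposes to $c$ in the run producing $\mu''$. The hard part, which this device is designed to circumvent, is the genuine interaction between the chains of different students of $T$: as the worked example shows, isolating $t$ may route its chain to a different element of $S^*$ than it reaches inside $\mu'$, so one cannot simply lift a fixed chain out of the full manipulation. The resolution is that order independence lets us realize the isolated chain as the \emph{first} chain of a legitimately reordered execution of $\mu'$ — at which moment nothing else has been perturbed — while \lemref{eachtones} guarantees that for every processing order the correspondence sending each dropped $t$ to the $S^*$-student at which its chain closes is a bijection, which is precisely what is needed.
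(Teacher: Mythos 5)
Your reordering device does not do what you need it to do, and this is where the proof breaks. Order independence \citep{MaGo89a} lets you choose the order in which \emph{pending proposals} are processed; it gives you no control over \emph{which student a college rejects} when a proposal arrives --- that is dictated by the submitted preference list and the current set of holds. In the run yielding $\mu'$, the list $c$ has submitted ranks all of $T$ low in some fixed internal order, so when your chosen $u \in U$ reaches $c$ (with $c$ holding $\mu(c)$), the college rejects whichever held member of $T$ that list ranks \emph{lowest}, not your chosen $t$. Since Lemma~\ref{lem:anytleadstosomes} must hold for every $t \in T$ --- and Theorem~\ref{rejectJustOne} applies it to the truthfully least-preferred $t_k$, which need not be the lowest-ranked element of $T$ on the manipulated list --- your schedule simply cannot, in general, start the run with the rejection of the $t$ you care about. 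One could try to repair this with a more elaborate schedule (delay the proposals to $c$ of all $t' \in T \setminus \{t\}$ and fill their slots with further members of $U$, so that $t$ becomes the lowest-ranked hold), but you neither state nor justify such a construction, and it is not implied by the sentence ``first lets the chosen $u \in U$ reach $c$, then rejects $t$.''

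Second, even granting such a schedule, the claimed coincidence of the two chains ``up to and including the first proposal they deliver back to $c$'' is not secured. It requires (i) synchronizing the two runs so that the market is in the \emph{same} partial state at the instant $t$ is rejected --- ``truthful configuration'' is a time-dependent notion, and the rejection of $t$ occurs at different stages of the truthful cascade in the two runs --- and (ii) that $c$'s decisions agree in both runs on every proposal the chain sends back to $c$ before the one from $S^*$. Point (ii) fails in general: Lemma~\ref{eachtones} only says the chain \emph{leads to} some $s \in S^*$ proposing to $c$, not that the \emph{first} return to $c$ is by a member of $S^*$. An intermediate student $w$ of the chain may propose to $c$ on its way down its list, and the two submitted lists (the one producing $\mu'$, which must rank $w$ below the held members of $U$, and the one producing $\mu''$, which ranks $w$ truthfully) can treat $w$ differently; after that the runs diverge and the coupling is lost. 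Your final appeal to Lemma~\ref{eachtones} ``for every processing order'' also treats chains as schedule-invariant objects, which they are not --- that lemma is proved for the given run of $\mu'$. This interaction problem, which you correctly identify as the hard part, is exactly why the paper does not try to embed the isolated chain into a reordered run of $\mu'$: instead it follows the $T'=\{t\}$ chain directly and shows, by transitivity of the colleges' true preferences, that wherever this chain would have met an artifact of another $t'$-chain (a rejected $t' \in T$, or a student who was only present at a college because of $t'$'s chain), it displaces the underlying student instead and merges into the tail of that other chain, hence still terminates with some member of $S^*$ proposing to $c$.
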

\begin{proof}
According to Lemma \ref{eachtones}, the rejection of $t$ by $c$ triggers a chain of rejections leading to a distinct $s \in S^*$ proposing to $c$. If that chain is not affected by any other $t' \in T \setminus \{t\}$, then clearly $c$ can misrepresent its preferences such that $T'=\{t\}$, leading to $s$ proposing to $c$. Unaffected means that the students that are rejected along that chain are $\not \in T$ and did not propose to the college they are rejected from as part of a chain leading from some $t' \in T \setminus \{t\}$ to some $s \in S^*$ proposing to $c$.

We consider the two ways in which the chain of rejections triggered by $c$ rejecting $t$ can be affected by some $t' \in T \setminus \{t\}$ separately.
\begin{itemize}[noitemsep,topsep=0pt,parsep=0pt,partopsep=0pt]
\item[1.] A student rejected along that chain is some $t' \in T$. \footnote{The first chain of rejections in the example is an instance of this case.}

Let $c'$ be the college that rejects $t'$ and $s'$ be the student in whose favor $c'$ rejects $t'$. Then $c'$ was at some point proposed to by $t'$ and rejected a student $s''$ in its favor. Since $s' >_{c'} t'$ and $t' >_{c'} s''$ we obtain by transitivity of preferences that $s' >_{c'} s''$.

Therefore if $t'$ does not propose to $c'$, but $s'$ does, $c'$ will reject $s''$ in its favor. Since $c'$ rejecting $s''$ in favor of $t'$ was part of the path from $t'$ to some $s \in S^*$, the new path containing $c'$ rejecting $s''$ for $s'$ will also end in this same $s \in S^*$.
\item[2.] A student $s'$ that is rejected by some $c'$ along the chain only proposed to $c'$ due to a chain of rejections triggered by some $c$ rejecting some $t' \in T \setminus \{t\}$. 

Let $s''$ be the student that $c'$ rejected in favor of $s'$ and $s'''$ be the student in whose favor $c'$ rejected $s'$. Then the chain of rejections that is triggered by $c$ rejecting $t'$ and includes $s'$ proposing to $c'$ which rejects $s''$ in its favor leads to some $s \in S^*$ proposing to $c$. Since $s''' >_{c'} s'$ and $s' >_{c'} s''$ we obtain
by transitivity of preferences that $s''' >_{c'} s''$.

Therefore if $s'$ does not propose to $c'$, but $s'''$ does, $c'$ will reject $s''$ in its favor. Since $c'$ rejecting $s''$ in favor of $s'$ was part of the path from $t'$ to some $s \in S^*$, the new path containing $c'$ rejecting $s''$ for $s'''$ will also end in this same $s \in S^*$.
\end{itemize}

Therefore it holds for each $t \in T$ that $c$ can misrepresent its preferences s.t. $T'=\{t\}$ and some $s \in S^*$ proposes to $c$ during the manipulated execution of DAA, though not necessarily s.t. $s >_c t$.
\end{proof}

Using \lemref{lem:anytleadstosomes}, we prove the following.

\begin{theorem}
\label{rejectJustOne}
If a given college $c$ can beneficially manipulate student-proposing DAA via misrepresenting preferences, then there also exists a beneficial manipulation by $c$ s.t. $|T'|=1$.
\end{theorem}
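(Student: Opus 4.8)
The plan is to isolate, among the students $c$ acquires under the full manipulation $\mu'$, the single one that is $>_c$-best, and to show that rejecting only its counterpart in $T$ already constitutes a beneficial manipulation with $|T'|=1$. I would start from the characterization of beneficial manipulation recorded just before \lemref{eachtones}: since $\mu'$ is beneficial, $|T|=|S^*|$ and there is a bijection $\psi\colon T\to S^*$ with $\psi(t)>_c t$ for every $t\in T$. I will use $\psi$ only to extract a clean ordering fact, and then switch to the chain-induced correspondence for the actual construction.

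First I would prove the key ordering fact: the $>_c$-best student $s^*:=\max_{>_c}S^*$ is preferred by $c$ to \emph{every} student in $T$. Let $t_{\max}:=\max_{>_c}T$. Then $\psi(t_{\max})>_c t_{\max}$, and since $\psi(t_{\max})\in S^*$ we have $s^*\geq_c\psi(t_{\max})$, so $s^*>_c t_{\max}\geq_c t$ for all $t\in T$; hence $s^*>_c t$ for every $t\in T$. The point of going through the maximum is that it avoids needing the full sorted characterization of responsive dominance: a single application of the improvement bijection to $t_{\max}$ suffices.

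Next I would bring in the chain structure. By \lemref{eachtones} the chains of rejections assign to each $t\in T$ a \emph{distinct} $s\in S^*$, which (as $|T|=|S^*|$) yields a bijection $\phi\colon T\to S^*$ sending $t$ to the unique $S^*$-student its chain makes propose to $c$. I would then set $t^*:=\phi^{-1}(s^*)$, the student whose chain terminates at $s^*$, and apply \lemref{lem:anytleadstosomes} with this particular $t^*$: there is a misrepresentation by $c$ with $T'=\{t^*\}$ under which $s^*$ proposes to $c$. Ranking $s^*$ above the auxiliary $U$-student in $c$'s reported list then forces $c$ to accept $s^*$, and by the ordering fact $s^*>_c t^*$, so the swap is beneficial.

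The last thing to verify—and where I expect the main obstacle to lie—is the bookkeeping: that under $T'=\{t^*\}$ the outcome is exactly $\mu''(c)=(\mu(c)\setminus\{t^*\})\cup\{s^*\}$, rather than a run in which rejecting $t^*$ costs $c$ a second original match or brings $c$ a second new student. This is precisely what \lemref{eachtones} and \lemref{lem:anytleadstosomes} are designed to exclude: $T'=\{t^*\}$ means $c$ retains every other original match and thus frees exactly one slot, while the single chain triggered by rejecting $t^*$ delivers exactly one new proposer, $s^*$, so $S^{*\prime}=\mu''(c)\setminus\mu(c)=\{s^*\}$. Consequently $\mu''(c)>_c\mu(c)$ under responsive preferences with $|T'|=1$, and the theorem follows by quoting these two lemmas rather than re-examining the DAA execution.
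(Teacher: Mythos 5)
Your construction has a genuine gap at the point where you invoke \lemref{lem:anytleadstosomes} ``with this particular $t^*$'' and conclude that $s^*$ itself proposes to $c$. The lemma does not say that: it only guarantees that \emph{some} $s\in S^*$ proposes to $c$ when $T'=\{t^*\}$, and its proof makes clear why the specific endpoint cannot be prescribed. When the chain triggered by rejecting $t^*$ is affected by other members of $T$ (the two cases analysed in that proof), the rerouted chain in the restricted run terminates at the endpoint of a \emph{different} chain --- the one belonging to some $t'\in T\setminus\{t^*\}$ --- and not at $\phi(t^*)=s^*$. The paper even flags this with the closing remark of that lemma: ``though not necessarily s.t. $s >_c t$.'' So the chain-induced bijection $\phi$ on which your choice $t^*=\phi^{-1}(s^*)$ relies is not stable under passing from the full manipulation to $T'=\{t^*\}$, and your final bookkeeping claim $\mu''(c)=(\mu(c)\setminus\{t^*\})\cup\{s^*\}$ inherits the same unsupported assumption.

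This matters because $t^*$ can be any member of $T$, possibly its $>_c$-maximum: if the student actually delivered by the lemma is a low-ranked member of $S^*$, it need not beat $t^*$, and the resulting manipulation is not beneficial. The paper's proof closes exactly this hole by choosing the opposite extreme: take $t_k=\min_{>_c}T$. From the improvement bijection $\psi$ one gets that \emph{every} $s\in S^*$ satisfies $s>_c t_k$ (each $s=\psi(t)$ beats its $t$, and $t\geq_c t_k$), so whichever member of $S^*$ the lemma produces, accepting it in place of $t_k$ is beneficial. Your ordering fact ($s^*=\max_{>_c}S^*$ beats all of $T$) is the dual statement, but it is the wrong dual: since you cannot control which element of $S^*$ arrives, you need a $t$ that is beaten by \emph{all} of $S^*$, not an $s$ that beats all of $T$. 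Replacing your choice of $t^*$ by $t_k$ (and dropping the appeal to $\phi$ altogether) repairs the argument and recovers the paper's proof.
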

\begin{proof}
Let $t_k$ be the least preferred student among $T$, i.e. $\forall t' \in T \setminus \{t_k\}: t' >_c t_k$. Then according to Lemma \ref{lem:anytleadstosomes} there exist a misrepresentation of preferences by $c$ and a student $s \in S^*$ s.t. $T'=\{t_k\}$ and $s$ proposes to $c$ during the manipulated execution of DAA yielding $\mu''$.
Since \emph{all} students $s \in S^*$ are strictly preferred to $t_k$,$c$ will accept this proposal and thereby  $\mu''(c)=\{s\} \cup \mu(c) \setminus \{t_k\}$. The manipulation $\mu''$ is therefore beneficial for $c$ w.r.t. $\mu$.
\end{proof}


\paragraph{Reduction to One-to-One}
\label{section:reduction}
Using Theorem \ref{rejectJustOne}, we show that the problem of deciding whether a given college $c \in C$ can beneficially manipulate student-proposing DAA via misrepresenting its preferences when all other agents state their preferences truthfully can be reduced to that of finding a beneficial manipulation via misrepresentation for a given proposee in a one-to-one matching market.
We first show that the execution of \studentproposing DAA for a many-to-one matching market can be simulated via executing DAA for an associated one-to-one market as suggested in \citep[][p. 133ff]{RoSo90a}:
\begin{lemma}
\label{one2oneReduction}
Let $C, S$ be the agents of a many-to-one matching market, $\mathcal{P}$ their preference profile and $\mu_\text{m2o}$ denote the matching that results from applying \studentproposing DAA to the market. Then there exists a one-to-one matching market with sets of agents $C', S$ and preference profile $\mathcal{P}'$ and a mapping function $f: C \rightarrow C'$ s.t. 
$\forall c \in C: \mu_\text{m2o}(c) = \bigcup_{c_i \in f(c)} \mu_\text{o2o}(c_i)$
where $\mu_\text{o2o}$ denotes the matching that results from applying DAA to the one-to-one market, and
$\forall s \in S: f(\mu_\text{m2o}(s)) \ni \mu_\text{o2o}.$
\end{lemma}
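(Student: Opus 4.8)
The plan is to build the associated one-to-one market by the standard device of splitting each college into unit-capacity seats, and then to reason not about the executions of the two algorithms but about their outputs: each run of DAA returns the student-optimal stable matching of its market, so it suffices to produce a bijection between the stable matchings of the two markets that respects $f$ and carries the student-optimal matching to the student-optimal matching. Concretely, for each $c\in C$ I introduce seats $c^1,\dots,c^{q(c)}$, set $f(c)=\{c^1,\dots,c^{q(c)}\}$, let $C'=\bigcup_{c\in C}f(c)$ with every seat of unit capacity, give each seat $c^j$ the preference $>_c$ over $S$, and obtain each student's list in $\mathcal{P}'$ from its list in $\mathcal{P}$ by replacing every college $c$ with the block $c^1>_s c^2>_s\dots>_s c^{q(c)}$ (so all seats of a more preferred college precede all seats of a less preferred one, and within a college lower indices are preferred). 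Let $\Phi$ map a stable m2o matching $\mu$ to the o2o matching that seats the students of each $\mu(c)$ on $c^1,c^2,\dots$ in decreasing $>_c$-order, the best student on $c^1$.

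First I would show that $\Phi$ is a bijection between the stable matchings of the two markets. For the forward direction I assume $(s,c^j)$ blocks $\Phi(\mu)$ and split on whether the college of $\mu(s)$ equals $c$: if it does, the seat-ordering forces the occupant of $c^j$ to be $>_c$-preferred to $s$, contradicting the blocking condition; if it does not, $s$ preferring seat $c^j$ to its current seat means $s$ prefers college $c$ to $\mu(s)$, and $c^j$ preferring $s$ (to its occupant, or because it is empty) yields, via responsiveness, that $(s,c)$ blocks $\mu$. For the inverse I first observe that in any stable o2o matching $\nu$ the seats of each college are filled top-down and in $>_c$-order (otherwise a student on a worse seat together with an empty or worse-occupied better seat of the same college would block), so $\mu(c)=\bigcup_j\nu(c^j)$ is the well-defined inverse; stability of this $\mu$ is then checked by the symmetric translation, sending a responsive blocking pair $(s,c)$ to a single-seat blocking pair at either an empty seat or the last occupied seat $c^{q(c)}$.

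Next I would show $\Phi$ preserves student-optimality, which, together with the fact that DAA outputs exactly the student-optimal stable matching in each market, finishes the proof. It is cleanest to argue that $\Phi^{-1}(\mu_\text{o2o})$ is student-optimal in the m2o market: if some student $s$ could attain a strictly better college $c'$ in a stable m2o matching $\mu'$, then in $\Phi(\mu')$ the student $s$ sits on a seat of $c'$, and since every seat of $c'$ beats every seat of the college that $s$ receives under $\Phi^{-1}(\mu_\text{o2o})$, this would give $s$ a strictly better partner than in the student-optimal $\mu_\text{o2o}$, a contradiction. Hence $\Phi^{-1}(\mu_\text{o2o})$ gives every student its best achievable college and therefore equals the student-optimal $\mu_\text{m2o}$, i.e. $\mu_\text{o2o}=\Phi(\mu_\text{m2o})$. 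Reading off the definition of $\Phi$ then yields both asserted identities, $\mu_\text{m2o}(c)=\bigcup_{c_i\in f(c)}\mu_\text{o2o}(c_i)$ for every $c$ and $\mu_\text{o2o}(s)\in f(\mu_\text{m2o}(s))$ for every $s$.

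The main obstacle I expect is the stability translation in the bijection, and within it the inverse direction in particular. One must first nail down that stability forces the seats of each college to be occupied top-down in the college's preference order before $\bigcup_j\nu(c^j)$ can even be recognized as the $\Phi$-image of an m2o matching, and then convert a responsive blocking pair $(s,c)$ -- whose definition refers to the least preferred member of $\mu(c)$ or to an unfilled quota -- into a blocking pair at one specific seat. Handling the ``same college, better seat'' sub-case and the empty-seat sub-case correctly, leaning on strictness of preferences, is where the argument is most delicate.
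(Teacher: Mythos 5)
Your proposal is correct and is essentially the argument the paper relies on: the paper states this lemma without giving a proof of its own, deferring to the seat-splitting (``related marriage market'') construction of Roth and Sotomayor (1990, p.~133ff), and your construction of $C'$, $\mathcal{P}'$, $f$, the stability bijection $\Phi$ (students of $\mu(c)$ seated in decreasing $>_c$-order), and the transfer of student-optimality is exactly that standard argument. The only difference in flavor is that the paper's prose speaks of ``simulating the execution'' of many-to-one DAA, whereas you argue at the level of outputs --- invoking that each algorithm returns the student-optimal stable matching of its market --- which is sound and is also how the cited source obtains the correspondence.
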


 By combining Theorem \ref{rejectJustOne} and Lemma \ref{one2oneReduction}, we reduce the problem of deciding whether a given college $c$ can beneficially manipulate DAA to that of deciding whether at least one agent $c_i \in f(c)$ can beneficially manipulate DAA in the related one-to-one market. This is formally stated in the following theorem.

\begin{theorem}
\label{iffonecanmanipulate}
Let $C, S$ be the agents of a many-to-one matching market, $\mathcal{P}$ their preference profile and $c \in C$. Then $c$ can beneficially manipulate \studentproposing DAA via misrepresenting its preferences if and only if in the associated one-to-one market with sets of agents $C',S$, preference profile $\mathcal{P}'$ and mapping function $f$ there exists at least one $c_i \in f(c)$ s.t. $c_i$ can beneficially manipulate DAA via misrepresenting its preferences. 
\end{theorem}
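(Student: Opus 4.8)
The plan is to prove both implications by combining the single‑rejection characterization of Theorem~\ref{rejectJustOne} with the market simulation of Lemma~\ref{one2oneReduction}. The key observation driving both directions is that a rejection chain interacts with the block of copies $f(c)$ only through the students that enter or leave $f(c)$ as a whole: for the purpose of external proposals the block $f(c)$ in the one‑to‑one market reacts exactly as the single college $c$ does in the many‑to‑one market, since all copies share the preferences of $c$ and every college outside $f(c)$ is literally the same in both markets. Thus a chain triggered by $c$ rejecting a student $t$ corresponds step‑for‑step to the chain triggered by the copy of $c$ that held $t$ rejecting $t$, and the two chains return the same new student to $c$ respectively to $f(c)$.

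For the forward direction, assume $c$ can beneficially manipulate \studentproposing DAA. By Theorem~\ref{rejectJustOne} there is a beneficial manipulation that rejects a single student, the $c$‑least preferred $t \in T$, so that $\mu''(c) = (\mu(c)\setminus\{t\})\cup\{s\}$ with $s >_c t$. Under truthful play in the associated one‑to‑one market, $t\in\mu(c)$ is matched to a unique copy $c_i\in f(c)$ by Lemma~\ref{one2oneReduction}. I would first record that the truthful one‑to‑one outcome is \emph{assortative} on the block: because every student prefers the copies of $c$ in the same fixed order and all copies rank students as $c$ does, the $j$‑th most preferred copy receives the $j$‑th $c$‑best student of $\mu(c)$; hence $t$ is held by $c_i=c^{k}$, where $t$ is the $k$‑th $c$‑best student of $\mu(c)$. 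I then let $c_i$ replay $c$'s misrepresentation restricted to its single slot, demoting $t$ below some truly worse student while keeping the students of $S^*$ above it, so that $c_i$ rejects $t$. By the block‑as‑single‑college correspondence this rejection triggers the same chain as in the many‑to‑one market and returns $s$ to $f(c)$. It remains to pin down that $c_i$ itself gains: tracking the internal cascade, when $s$ re‑enters $f(c)$ it proposes down the fixed copy order, pushing better‑ranked students off $c^1,\dots,c^{k-1}$ until the displacement reaches $c^{k}=c_i$, which has freed the slot formerly held by $t$. Checking the two cases, namely that $s$ settles at $c^{k}$ directly or that an intermediate student $x_{k-1}$ does, one finds $c_i$ ends matched to a student $c$‑preferred to $x_{k-1}$ or equal to $s$, in either case strictly $c$‑preferred to $t$. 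Thus $c_i$ beneficially manipulates the one‑to‑one market.

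For the converse, assume some copy $c_i\in f(c)$ can beneficially manipulate the one‑to‑one market, improving its match from its truthful partner $y$ to $y'>_c y$. As in the analysis underlying Lemma~\ref{eachtones} and Lemma~\ref{lem:anytleadstosomes}, a beneficial manipulation by a single proposee rejects its truthful partner $y$ and thereby triggers a rejection chain that returns a strictly better partner $y'$. Since $y\in\mu_\text{o2o}(c_i)\subseteq\mu_\text{m2o}(c)$ we have $y\in\mu(c)$, so $c$ may perform the analogous single rejection with $T'=\{y\}$ in the many‑to‑one market. By the same correspondence the identical chain returns $y'$ to $c$, and because $c$ retains all its other matches the resulting set $(\mu(c)\setminus\{y\})\cup\{y'\}$ responsively dominates $\mu(c)$. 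Hence $c$ beneficially manipulates \studentproposing DAA, and the equivalence follows.

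I expect the main obstacle to be making the ``block behaves like $c$'' correspondence rigorous: Lemma~\ref{one2oneReduction} is stated only for truthful profiles, so I must argue that the equivalence of rejection chains persists once exactly one agent deviates identically in both markets, and, in the forward direction, account carefully for the internal cascade among the copies to guarantee that the \emph{deviating} copy $c_i$, rather than some other copy of $c$, is the one whose match strictly improves.
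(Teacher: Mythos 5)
Your proof has a genuine gap, and it sits exactly where you flag ``the main obstacle'' in your last paragraph: the block-as-single-college correspondence on which \emph{both} directions rest is false. In the one-to-one market only one copy misreports; the remaining copies of $c$ are truthful, share $c$'s ranking of students, and, by the assortativity you correctly record (and by order-independence of proposals, so that the triggering proposal can be scheduled after the truthful run has settled), the copies $c^{k+1},\dots,c^{q(c)}$ hold students that $c$ ranks \emph{below} $t=x_k$. Consequently, when the deviating copy $c^{k}$ rejects $t$, the student $t$ does not leave the block and does not start the many-to-one chain: $t$'s next proposal goes to $c^{k+1}$, which truthfully prefers $t$ to its current match $x_{k+1}$, so it accepts $t$ and rejects $x_{k+1}$; this cascades down the block, and the student expelled from $f(c)$ is always the block's least preferred match $x_{q(c)}$, never $t$ itself (unless $t=x_{q(c)}$). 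The external rejection chain in the one-to-one market therefore starts with $x_{q(c)}$, while the chain in the many-to-one market starts with $t$; these are chains of different students with different preference lists and need not return the same student, or any student at all. The same objection applies verbatim to your converse direction, where you assert that $c$ rejecting $y$ in the many-to-one market reproduces ``the identical chain'' triggered by $c_i$ rejecting $y$ in the one-to-one market.

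This is not a presentational issue that more care can patch; the claimed correspondence genuinely fails. Consider students $\{a,b,s,u\}$ and colleges $c$ (capacity $2$), $d$, $e$ (capacity $1$ each) with preferences $c\colon s>a>b>u$; $d\colon a>s>u>b$; $e\colon b>u>a>s$; and $a\colon c>d>e$; $b\colon c>e>d$; $s\colon d>c>e$; $u\colon c>e>d$. Truthful \studentproposing DAA gives $\mu(c)=\{a,b\}$, $\mu(d)=\{s\}$, $\mu(e)=\{u\}$. Reporting $s>b>u>a$, college $c$ gets $a$ rejected, $a$ dislodges $s$ from $d$, $s$ then proposes to $c$, and $c$ ends with $\{s,b\}$, a responsive improvement; note $T'=\{a\}$ while the least preferred match is $b$. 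In the Roth--Sotomayor one-to-one market, however, no single copy can gain from \emph{any} misreport: $s$ leaves $d$ only after a proposal from $a$, and $a$ reaches $d$ only if both copies reject it, which a lone deviator cannot arrange because the other, truthful copy absorbs $a$ whenever it is pushed out (one checks that under every misreport the set of proposers of $c^{1}$ is exactly $\{a,b,u\}$ and that of $c^{2}$ is $\{b,u\}$). So the forward direction cannot be obtained by letting the copy holding $t_k$ ``replay'' $c$'s misrepresentation, as you propose following the paper's announced plan of combining \thmref{rejectJustOne} with \lemref{one2oneReduction}: the many-to-one college can choose \emph{which} of its matches exits the market, whereas a lone copy can only ever cause $x_{q(c)}$ to exit. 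Indeed, in this instance the stated equivalence itself fails for the Roth--Sotomayor associated market, so any correct treatment would need either a different associated market or a genuinely different argument, not a tightened version of the chain-identity claim.
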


 The problem of finding a beneficial manipulation via misrepresentation for a given proposee in one-to-one matchings was shown to be solvable in polynomial time in \citep{TST01a}. We therefore obtain our final result.
\begin{theorem}
\label{thm:o2mInP}
The problem of deciding whether a given college $c \in C$ can beneficially manipulate \studentproposing DAA via misrepresenting its preferences is solvable in polynomial time.
\end{theorem}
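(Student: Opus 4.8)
The plan is to assemble the final result by chaining together the structural reduction already established, so that the bulk of the work reduces to applying an existing polynomial-time algorithm. The key ingredients are \thmref{rejectJustOne}, \lemref{one2oneReduction} and \thmref{iffonecanmanipulate}, together with the polynomial-time procedure of \citet{TST01a} for deciding manipulation by a single proposee in one-to-one markets. Concretely, given the many-to-one instance with agents $C,S$, capacities $q$ and profile $\mathcal{P}$, and the designated college $c$, I would first construct the associated one-to-one market with agent sets $C',S$, preference profile $\mathcal{P}'$ and mapping $f$ guaranteed by \lemref{one2oneReduction}. The first thing to verify is that this construction runs in polynomial time and produces an instance of polynomial size: since a college can be matched to at most $|S|$ students we may assume $q(c')\le |S|$, so each college $c'$ contributes $q(c')$ copies and $|C'| = \sum_{c' \in C} q(c') \le |C|\cdot |S|$; the derived lists $\mathcal{P}'$ are obtained by a routine expansion, so the whole market can be written down in time polynomial in the original input.

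Next I would invoke \thmref{iffonecanmanipulate}, which reduces the decision problem for $c$ to the question: does there exist at least one copy $c_i \in f(c)$ that can beneficially manipulate DAA via misrepresenting its preferences in the one-to-one market? Under \studentproposing DAA the students are the proposers, so every copy $c_i$ is a \emph{proposee}, which is precisely the setting in which \citet{TST01a} give a polynomial-time algorithm to decide whether a proposee has a beneficial misrepresentation. The algorithm therefore iterates over the copies $c_i \in f(c)$---of which there are $|f(c)| = q(c) \le |S|$, hence only polynomially many---and runs the \citeauthor{TST01a} procedure for each, answering ``yes'' precisely when at least one copy succeeds.

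Finally, the running time is the sum of a polynomial-time construction of the one-to-one market and polynomially many invocations of a polynomial-time subroutine, which is again polynomial; this yields the claimed bound. I do not expect a genuine obstacle here, as the heavy lifting was done in the earlier lemmas: the only things to check are really (i) that \lemref{one2oneReduction} inflates the instance by at most a polynomial factor, and (ii) that the copies $c_i$ are proposees so that the \citet{TST01a} result applies verbatim. The one point worth emphasizing concerns correctness rather than complexity: \thmref{iffonecanmanipulate}---whose validity rests on \thmref{rejectJustOne}, i.e. on the fact that a beneficial many-to-one manipulation can always be realized by rejecting a single match---is exactly what guarantees that a single-agent manipulation discovered in the one-to-one market lifts back to a genuine beneficial misrepresentation by $c$ in the original many-to-one market.
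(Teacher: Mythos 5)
Your proposal is correct and follows essentially the same route as the paper's own proof: reduce via \thmref{iffonecanmanipulate} to the associated one-to-one market of \lemref{one2oneReduction}, then run the polynomial-time algorithm of \citet{TST01a} on each of the $|f(c)|=q(c)$ copies of $c$, which are proposees under \studentproposing DAA. The only difference is that you spell out details the paper leaves implicit (the polynomial size of the constructed one-to-one instance and the fact that the copies are proposees), which is a welcome but not substantively different elaboration.
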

\begin{proof}
By Theorem \ref{iffonecanmanipulate}, we can decide the problem by checking whether in the associated one-to-one market at least one $c_i \in f(c)$ can beneficially manipulate DAA via misrepresenting its preferences. Since $|f(c)|=q(c)$ and a possible beneficial manipulation for a given proposee in a one-to-one market, can be found in polynomial time, the statement follows. 
\end{proof}

Regarding the problem of finding an \emph{optimal} manipulation for a college under \collegeproposing DAA, we immediately get a polynomial algorithm by iteratively applying the steps from the proof of \thmref{thm:o2mInP}. As long as there still is a beneficial manipulation, we find it in polynomial time and the number of possible improvements for a single college is polynomial in the input size.

\begin{corollary}
A college's optimal manipulation by misrepresenting preferences under \studentproposing DAA can be found in polynomial time.
\end{corollary}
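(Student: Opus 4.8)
The plan is to convert the polynomial-time decision-and-search procedure of \thmref{thm:o2mInP} into a hill-climbing algorithm that, starting from the truthful report, repeatedly replaces one matched student by a strictly more preferred one until no further improvement is possible. The two ingredients I need are that each improving step can be found in polynomial time, and that the number of steps is polynomially bounded while the terminal report is genuinely optimal for $c$ rather than merely a local maximum.

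First I would set up the iteration. Let $c$ report truthfully and put $M_0=\mu(c)$. At stage $i$, suppose $c$ currently reports some (possibly falsified) complete list whose run assigns it the set $M_i$ with $|M_i|=q(c)$; since all other agents still report truthfully, the hypotheses behind \thmref{thm:o2mInP} apply with $M_i$ in the role of the status-quo match. Hence in polynomial time I can decide whether $c$ has a report yielding a set strictly better than $M_i$ under the responsive relation, and if so exhibit one. By \thmref{rejectJustOne} such an improving report may be taken with $|T'|=1$, so it ejects a single student $t$ and, by \lemref{lem:anytleadstosomes} together with choosing $t$ as the least preferred ejected student, draws in some $s$ with $s>_c t$. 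Thus $M_{i+1}=(M_i\setminus\{t\})\cup\{s\}$ is a single strictly-improving swap, and I repeat until no improving report exists.

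For termination and optimality I would use the rank potential $\Phi(M)=\sum_{s\in M}\mathrm{rank}_c(s)$, where $\mathrm{rank}_c(s)$ is the position of $s$ in $c$'s true list. Each swap with $s>_c t$ strictly decreases $\Phi$ by at least one, and $\Phi$ is an integer confined to an interval of width $O(q(c)\,|S|)=O(|S|^2)$, so at most $O(|S|^2)$ rounds occur; each round runs the procedure of \thmref{thm:o2mInP} once, at a cost dominated by the $|f(c)|=q(c)$ one-to-one manipulation checks of \citet{TST01a}, so the total computation is polynomial. When the loop halts, no report gives $c$ a strictly better set than the current $M$, and since any strictly better \emph{achievable} outcome would, by the single-swap localization, furnish an improving swap from $M$, the set $M$ is optimal for $c$ among all misrepresentations.

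The main obstacle I anticipate is justifying that the machinery underlying \thmref{thm:o2mInP}---the new-proposal and chain-of-rejections analysis and the single-swap reduction of \thmref{rejectJustOne}---stays valid when the baseline run is taken under $c$'s \emph{current}, already-manipulated report rather than its truthful one; this is exactly what forbids a suboptimal local maximum. The saving observation is that those arguments use only that every agent other than $c$ is fixed and truthful and that the prospective run is compared against the baseline run under $c$'s current list, never that this list is the truthful one. Once I verify that replacing ``truthful run'' by ``current run'' throughout leaves each step intact, the hill-climbing argument above delivers the optimal manipulation in polynomial time.
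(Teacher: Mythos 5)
Your proposal takes essentially the same route as the paper: its proof of this corollary is precisely the iteration you describe, namely repeatedly invoking the procedure behind \thmref{thm:o2mInP} to find a beneficial manipulation as long as one exists, and noting that each such step runs in polynomial time while the number of possible improvements for a single college is polynomially bounded. Your explicit rank-potential bound on the number of rounds, and your flagged concern that the chain-of-rejections machinery must remain valid when the baseline run uses $c$'s current (already manipulated) report rather than the truthful one, simply make rigorous what the paper's terse argument leaves implicit.
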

Note, however, that there may be many optimal strategies as many matchings are incomparable under responsive preferences.

\section{Manipulation of college-proposing DAA}
\label{section:m2oComplexity}
We show that the problem of deciding whether a given college $c \in C$ can beneficially manipulate college-proposing DAA via misrepresenting its preferences when all other agents state their preferences truthfully is fixed-parameter tractable. The worst case complexity of finding a beneficial manipulation for $c$, if one exists, depends only on its capacity $q(c)$, not on the size of the market. 
Let $S, C$ be the agents of a many-to-one matching market and $\mu$ denote the matching that is obtained by applying DAA to this market. In general, a college $c \in C$ can manipulate the outcome of DAA by accepting a student it does not accept during an execution of DAA under truthful preferences or by rejecting a student it is matched with under $\mu$. Under \collegeproposing DAA, this can be achieved by making new proposals 
or by not proposing to a student the college is matched with under $\mu$. Making new proposals can never lead to a beneficial manipulation, as formally stated in the following lemma. 

\begin{lemma}
\label{npnotbeneficial}
Let $S, C$ be the agents of a many-to-one matching market and $\mu$ the matching that is obtained by applying DAA to the market. Given responsive preferences over sets, no college can benefit from a manipulation in which it proposes to a student it does not propose to during an execution of college-proposing DAA under the truthful preferences.
\end{lemma}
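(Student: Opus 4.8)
The plan is to show that a new proposal can only make the run of \collegeproposing DAA \emph{more aggressive}, and that extra aggression can never help the manipulating college under responsive preferences. Write $\mu$ for the truthful outcome and suppose $c$, reporting a complete list $P'_c$, makes a new proposal to some student $s^*$ in the resulting run $\mu'$. First I would record two elementary structural facts. (a)~In any run of \collegeproposing DAA the set of students a college has contacted is always an initial segment of its reported list, since a college only ever extends an offer to its most preferred not-yet-contacted student. Hence under truth $c$ keeps proposing until it is either full or has exhausted all students, so a new student $s^*$ can exist only when $|\mu(c)|=q(c)$, and every such $s^*$ is \emph{truly} ranked below every member of $\mu(c)$. (b)~A proposal that is immediately rejected leaves the remainder of the run unchanged, so I may restrict attention to new proposals that are tentatively accepted and therefore trigger a cascade of further offers by the displaced colleges.

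Next I would set up the comparison that isolates the effect of the new proposals. Let $P^{\mathrm{prop}}$ denote the students $c$ contacts under truth, and let $P''_c$ be obtained from $P'_c$ by moving every new student to the bottom, so that under $P''_c$ the college makes \emph{no} new proposal; write $\mu''$ for the corresponding outcome (note $P'_c$ and $P''_c$ drop exactly the same truthful students, so this cleanly separates ``new proposals'' from ``dropping''). By fact~(a) any set $c$ can hold has size at most $q(c)=|\mu(c)|$, and the only students $\mu'(c)$ can contain beyond those it already reaches tamely are new ones, each strictly worse (truly) than all of $\mu''(c)\subseteq P^{\mathrm{prop}}$. A short responsive set comparison then shows that $\mu'(c)$ cannot be strictly preferred to $\mu''(c)$, so it suffices to prove $\mu''(c)\succeq_c\mu'(c)$: the new proposals secure $c$ nothing it could not already obtain without them.

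The heart of the argument, and the main obstacle, is a monotonicity (comparative-statics) claim for the two runs using $P'_c$ and $P''_c$, which differ only in that the former is more aggressive. I would prove that every student is weakly better off under the aggressive report, i.e. $\mu'(s)\succeq_s\mu''(s)$ for all $s$, by coupling the two executions and inducting on proposals: each additional accepted offer by $c$, together with every further offer it forces a displaced college to make, can only present a student with a weakly better tentative match, and a student's held college increases monotonically along any run. Granting this, every student holding $c$ at the end of the aggressive run holds in the tame run a college no better than $c$, while the truly worst students are precisely the new ones; feeding this back through the responsive comparison of the previous paragraph yields $\mu''(c)\succeq_c\mu'(c)$ and hence the lemma. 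Making the monotone coupling rigorous is the delicate part: one must maintain, stage by stage, the invariant that each student's tentative college under the aggressive run is weakly preferred to her tentative college under the tame run while each college other than $c$ has contacted a superset of its tame contacts, handle the interleaving of cascades by appealing to the order-independence of \collegeproposing DAA \citep{MaGo89a} so that the extra offers can be scheduled last, and check that the invariant survives each of the two elementary events (a student upgrading, and a displaced college extending its offers). Everything else is routine once this coupling is in place.
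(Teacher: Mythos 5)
Your argument never actually compares $\mu'(c)$ with $\mu(c)$, which is what the lemma requires, and the bridge you build instead --- the tame run $\mu''$ plus the monotone coupling --- rests on claims that are false. Fact~(a) is correct and is the right starting point, but the coupling fails in exactly the case your fact~(b) singles out as the one needing work: a new proposal that is tentatively accepted and \emph{survives to termination}. Take $q(c)=2$ with true list $s_1>_c s_2>_c s_3>_c s_4>_c s_5$; a college $c_1$ ranking $s_2$ first; a college $c_2$ ranking $s_3$ above $s_1$; and student preferences $s_1\colon c_2>c>c_1$, $s_2\colon c_1>c>c_2$, $s_3\colon c>c_2>c_1$, $s_4\colon c>\cdots$, $s_5\colon c>\cdots$. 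Truthfully, $c$ proposes to $s_1,s_2$, is rejected by $s_2$, takes $s_3$ away from $c_2$, loses $s_1$ to the displaced $c_2$, and ends with $\mu(c)=\{s_3,s_4\}$; so $c$ proposes to $s_1,\dots,s_4$ and $s_5$ is the unique new student. Under the aggressive report $s_1>s_2>s_5>s_3>s_4$, the rejection by $s_2$ sends $c$ to $s_5$, who accepts and holds forever; since $c$ then never proposes to $s_3$, college $c_2$ is never displaced and never proposes to $s_1$, so $\mu'(c)=\{s_1,s_5\}$. The tame report is the truthful list, so $\mu''=\mu$. Here $s_1$, $s_3$ and $s_4$ all \emph{strictly prefer the tame outcome}, so student-side monotonicity fails, and for a structural reason: once $s_5$ holds $c$, the tame run contains offers by $c$ (to $s_3$, $s_4$) that the aggressive run never makes; the two offer sets are not nested, order-independence cannot schedule offers that are never made, and the invariant ``aggressive contacts $\supseteq$ tame contacts'' is unmaintainable. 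The derived claim collapses too: $\mu''(c)=\{s_3,s_4\}$ and $\mu'(c)=\{s_1,s_5\}$ are responsively \emph{incomparable}, so $\mu''(c)\succeq_c\mu'(c)$ is simply not true. (Separately, your parenthetical that the tame run makes no new proposals is unjustified: rejection chains triggered by the truthful students that both reports drop can push $c$ past all old students.)

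The surviving-new-student case must instead be killed by a direct comparison with $\mu(c)$, which fact~(a) makes immediate and which your write-up never performs: every new student is truly below every member of $\mu(c)$ and $|\mu'(c)|\le q(c)=|\mu(c)|$, so any match containing a new student cannot responsively dominate $\mu(c)$ --- in the example, $\{s_1,s_5\}\not\succ_c\{s_3,s_4\}$, and we are done with no detour through $\mu''$. Only after that step may one assume all new proposals are eventually rejected; there your coupling is merely unproven rather than refuted, but even if granted, $\mu''(c)\succeq_c\mu'(c)$ yields only the weaker statement that any benefit achievable with new proposals is achievable without them, not the lemma's literal claim that a run containing a new proposal is never beneficial. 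That literal claim is actually false --- add a college $c_3$ ranking $s_5$ first, with $s_5$ preferring $c_3$ to $c$; then the report $s_1>s_2>s_5>s_4>s_3$ makes an (immediately rejected) new proposal to $s_5$ yet yields $\mu'(c)=\{s_1,s_4\}\succ_c\{s_3,s_4\}=\mu(c)$ --- so the ``without loss of generality'' reading is the only provable one, and your instinct to aim at it is sound even though the execution fails. A short correct route to it avoids run-coupling entirely: if $\mu'$ is beneficial, responsive domination of $\mu(c)$ forces every member of $\mu'(c)$ to be truly at least as good as $s_l$, hence $\mu'(c)$ consists of old students; let $c$ instead report $\mu'(c)$ as its top $q(c)$ entries. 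Then $\mu'$ is stable under this report (no student outside $\mu'(c)$ is ranked above any of $c$'s matches), and since college-proposing DAA returns the college-optimal stable matching, $c$ again receives exactly $\mu'(c)$ while proposing only within $\mu'(c)$ --- the same benefit with no new proposals.
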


 A beneficial manipulation can therefore only be achieved by falsifying preferences s.t. $c$ does not propose to at least one student $s \in \mu(c)$ during the manipulated execution of DAA. When considering manipulation via misrepresenting preferences, this implies that only colleges that fill their capacity under $\mu$ can beneficially manipulate DAA. Using Lemma \ref{npnotbeneficial}, we can prove the following lemma.
\begin{lemma}
\label{onlyfullcapacity}
Let $S, C$ be the agents of a many-to-one matching market in which all agents are required to submit complete preference lists, $\mu$ the matching resulting from applying DAA to the market and $c \in C$ s.t. $c$ can beneficially manipulate DAA via misrepresenting its preferences. Then $|\mu(c)|=q(c)$.
\end{lemma}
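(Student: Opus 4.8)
The plan is to argue by contradiction: suppose $c$ can beneficially manipulate yet $|\mu(c)|<q(c)$, and derive an impossibility. First I would record three facts. (a) By \lemref{npnotbeneficial} and the preceding discussion, a beneficial manipulation $\mu'$ can only arise by $c$ \emph{withholding} a proposal, so there is some $s_0\in\mu(c)\setminus\mu'(c)$. (b) Because preferences are complete and $q(c)$ is not filled, in the truthful run $c$ keeps being rejected and therefore proposes to \emph{every} student; hence no manipulation can generate a genuinely new proposee, and every student is reported-acceptable to $c$. (c) Since $\mu$ is stable for the true preferences and $c$ has an empty seat, no student can prefer $c$ to its partner, so $\mu(s)>_s c$ for every $s\notin\mu(c)$. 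The responsiveness convention forces $|\mu'(c)|\ge|\mu(c)|$ and, together with (a), the existence of a gained student $s^\ast\in\mu'(c)\setminus\mu(c)$; by (c) this $s^\ast$ satisfies $\mu(s^\ast)>_{s^\ast}c=\mu'(s^\ast)$, i.e.\ $s^\ast$ is strictly \emph{worse off} under $\mu'$.

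The heart of the proof is to follow the rejection chain that this gained student sits on, working inside the symmetric difference of $\mu$ and $\mu'$. Starting from $s^\ast$ at $c$, I would alternate between $\mu$- and $\mu'$-edges: $s^\ast$ is matched to $C_1:=\mu(s^\ast)>_{s^\ast}c$ under $\mu$, so by stability of $\mu'$ (all colleges but $c$ report truthfully) $C_1$ must be full under $\mu'$ with students it strictly prefers to $s^\ast$; pick one such student not matched to $C_1$ under $\mu$ and continue. At each truthful college $C_i$ I would use the two stability conditions in tandem: stability of $\mu'$ forces the newly picked student to outrank, at $C_i$, the one it displaced, while stability of $\mu$ forces the reverse comparison \emph{unless} the next student again strictly prefers its $\mu$-college to its $\mu'$-college. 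This propagates the invariant ``every student on the chain is pushed strictly downward by the manipulation,'' and simultaneously shows every college on the chain is full, under the relevant preferences, with a strictly better cohort.

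The main obstacle is to show this chain can neither close up nor run off the end of the market. If it closed into a cycle, the preference direction of the students would have to reverse somewhere, and at the college where a ``downward'' student meets an ``upward'' one the two stability conditions would demand $a>_{C_i}b$ and $b>_{C_i}a$ simultaneously---a contradiction. It cannot terminate at a student unmatched under $\mu$, for such a student would block $\mu$ together with the college that absorbed its seat; and it cannot terminate at a college that is \emph{underfilled} under $\mu'$, since the student it shed would then block $\mu'$, that college being truthful and the student reported-acceptable. The empty seat of $c$ is exactly the imbalance that forces the chain to seek such a terminus, so every possibility is contradictory. Hence no gained student $s^\ast$ can exist, giving $\mu'(c)\subseteq\mu(c)$; combined with fact (a) this yields $\mu'(c)\subsetneq\mu(c)$, so $\mu(c)>_c\mu'(c)$ under responsiveness, contradicting that the manipulation was beneficial. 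Therefore $|\mu(c)|=q(c)$.
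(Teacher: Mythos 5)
Your overall plan is reasonable, and in fact your intermediate target is true: an underfilled college cannot gain any student it does not already have, so $\mu'(c)\subseteq\mu(c)$. Facts (a)--(c), the existence of a gained student $s^\ast$, the invariant that chain students prefer their $\mu$-college to their $\mu'$-college, and two of your three terminus exclusions (a student unmatched under $\mu$; a \emph{truthful} college underfilled under $\mu'$) are all correct. The genuine gap is that you never handle the case in which the chain returns to $c$ itself, i.e. reaches a student $s_k$ with $\mu(s_k)=c$ but $\mu'(s_k)\neq c$. This is not a ``cycle forcing a preference reversal'': in the path $c\to s^\ast\to C_1\to s_2\to c$ every student still points the same way ($C_1>_{s^\ast}c=\mu'(s^\ast)$ and $c>_{s_2}C_1=\mu'(s_2)$), and the college-side comparisons demanded by stability (such as $s_2>_{C_1}s^\ast$) are simultaneously satisfiable. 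Indeed this returning-chain configuration is exactly how genuine beneficial manipulations by \emph{full} colleges operate (it is the college-proposing mirror of the ``chains of rejections'' in the student-proposing analysis), so no local stability argument can exclude it; only the empty seat can, and the sentence ``the empty seat is exactly the imbalance that forces the chain to seek such a terminus'' asserts this without proving it. Your dismissal of cycles is also incorrect for cycles avoiding $c$: two truthful colleges exchanging two students (a rotation) satisfies all four stability comparisons with no reversal, so such cycles are not contradictory --- they merely must be shown not to trap the chain, which needs bookkeeping (a fresh gained student on each revisit, available because a chain college is full under $\mu'$ and hence gains at least as many students as it loses).

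To close the gap you need a case split at $c$ plus a counting step. If $|\mu'(c)|<q(c)$, then since $c$'s \emph{reported} list is complete, stability of $\mu'$ under the reported profile already forbids any student from preferring $c$ to her $\mu'$-match (equivalently, $c$ proposes to every student in the manipulated run); the chain's terminal student prefers $c=\mu(s_k)$ to $\mu'(s_k)$, a contradiction. Note this uses reported-preference stability at $c$, so restricting your underfilled-terminus argument to truthful colleges gives away precisely the instance you need. If instead $|\mu'(c)|=q(c)$, then $|\mu'(c)\setminus\mu(c)| = q(c)-|\mu(c)|+|\mu(c)\setminus\mu'(c)| > |\mu(c)\setminus\mu'(c)|$, so $c$ gains strictly more students than it loses; one must then show that chains started at distinct gained students terminate at \emph{distinct} lost students of $c$, yielding an injection from $\mu'(c)\setminus\mu(c)$ into $\mu(c)\setminus\mu'(c)$ and the desired contradiction. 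Neither step appears in your write-up, and without them the proof does not go through. It is also worth noting that the paper proves this lemma by a completely different and much shorter route: it invokes \lemref{npnotbeneficial} (a beneficial manipulation must withhold a proposal from some student in $\mu(c)$) and observes that with complete lists a college keeps proposing until its quota fills, so an underfilled college proposes to every student --- an argument about the mechanics of the proposal process rather than about stability of the two outcomes.
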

\begin{proof}
According to Lemma \ref{npnotbeneficial}, a college can only beneficially manipulate college-proposing DAA by not proposing to student(s) it is matched with under $\mu$. Since all students $s \in S$ are acceptable to all colleges $c \in C$, a college proposes to students in order of its preferences until either its capacity is filled or it proposed to all students. Therefore any college that does not fill its capacity \emph{must} propose to all students.
\end{proof}
 For colleges $c$ with $|\mu(c)|<q(c)$ the question of whether this college can beneficially manipulate DAA can therefore trivially be answered without any further computations. The same applies to all colleges $c$ with capacity $q(c)=1$, as implied by the following lemma. 
%
%
In sum, only colleges $c$ with $|\mu(c)|=q(c)>1$ can potentially benefit from manipulating college-proposing DAA. A beneficial manipulation is achieved by misrepresenting preferences s.t. $c$ does not propose to a set of students $R \subseteq \mu(c)$ where $|R| \geq 1$. 
It is easy to see that if it is at all possible for $c$ to misrepresent its preferences s.t. it does not propose to any student $r \in R$, then this can also be achieved by $c$ misrepresenting its preferences s.t. all students $s \in S \setminus R$ are listed according to the truthful preferences, followed by all students $r \in R$ in arbitrary order. 

We now use these insights to show that the problem of deciding whether a given college $c \in C$ can beneficially manipulate college-proposing DAA via misrepresenting its preferences when all other agents state their preferences truthfully can be solved via at most $2^{q(c)-1}-1$ executions of DAA and is thereby fixed-parameter tractable w.r.t. the parameter $q(c)$.
\begin{theorem}
\label{thm:findingmanip}
Let $S, C$ be the agents of a many-to-one matching market. At most $2^{q(c)-1}-1$ executions of DAA are necessary in order to determine whether a given college $c \in C$ can beneficially manipulate the outcome of DAA via misrepresenting preferences.
\end{theorem}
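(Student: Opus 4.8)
The plan is to cash in the structural reductions already established. By Lemma~\ref{onlyfullcapacity} and the surrounding discussion, only colleges with $|\mu(c)|=q(c)>1$ are candidates, and every misrepresentation that can matter is equivalent to choosing a nonempty set $R\subseteq\mu(c)$, listing the students in $S\setminus R$ truthfully and then those in $R$ at the bottom in arbitrary order. So a brute-force search runs \collegeproposing DAA once for each such $R$, obtains the resulting match $\mu_R(c)$, and tests whether it responsively dominates $\mu(c)$; since $c$ fills its capacity, this test is just an element-wise comparison of the two sorted matched sets and is cheap. This already yields the crude bound of $2^{q(c)}-1$ runs, so the entire content of the theorem is to discard roughly half of the candidate sets.

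To reach $2^{q(c)-1}-1$ I would fix the drop-status of a single matched student, shrinking the effective ground set from $q(c)$ to $q(c)-1$ elements. Writing $\mu(c)=\{m_1,\dots,m_{q(c)}\}$ with $m_1>_c\cdots>_c m_{q(c)}$, the plan is to prove two reductions. First, that we may always take the least-preferred match $m_{q(c)}$ to lie in $R$: given a beneficial manipulation dropping a set $R$ with $m_{q(c)}\notin R$, I distinguish whether $m_{q(c)}$ survives in the resulting match. If it does not, then demoting it into $R$ leaves the run, and hence the beneficial outcome, unchanged, since a student the college does not retain can be pushed further down without effect. If it does survive, I argue that additionally dropping it still yields a beneficial match, because the freed slot is refilled by a student at least as good as $m_{q(c)}$ (the college can always fall back on $m_{q(c)}$ itself, now sitting at the very bottom of its list), while the gains already secured on the other slots persist. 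Second, that the all-drop set $R=\mu(c)$ is never needed: if it is beneficial, then keeping the top match $m_1$ and dropping the remaining $q(c)-1$ students is beneficial as well, by an exchange argument using responsive dominance (the best entry of any responsively dominating set is weakly preferred to $m_1$, so reinstating $m_1$ on top cannot hurt).

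Combining the two reductions, the sets worth testing are exactly those $R$ with $m_{q(c)}\in R$ and $R\neq\mu(c)$; these are in bijection with the proper subsets of $\{m_1,\dots,m_{q(c)-1}\}$, of which there are $2^{q(c)-1}-1$. Running DAA once per such $R$ and comparing gives the claimed bound, and since everything apart from the DAA executions is polynomial in the size of the market, this simultaneously delivers the fixed-parameter tractability in $q(c)$.

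The step I expect to be the main obstacle is the surviving-$m_{q(c)}$ case of the first reduction. Releasing a student the college actually keeps genuinely perturbs the \collegeproposing run: the rejection chains triggered by the rest of $R$ could in principle be rerouted once $m_{q(c)}$ is also demoted. The crux is to show these chains are robust, namely that the student attracted into each previously-improved slot is unaffected, or only improved, by the extra demotion, and that the refilled bottom slot lands on a student the college weakly prefers to $m_{q(c)}$. I would establish this with a transitivity-of-preferences chain argument in the spirit of \lemref{lem:anytleadstosomes}, tracking each displaced student and showing that any rerouted chain terminates at a student the college weakly prefers to the one it would otherwise have obtained.
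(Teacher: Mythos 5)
Your overall framework (canonical drop-sets $R\subseteq\mu(c)$, one run of \collegeproposing DAA per set, responsive comparison against $\mu(c)$) matches the paper's, and $2^{q(c)-1}-1$ is the right count --- but you test the wrong half of the subsets, and the reduction you use to justify this is false. The paper's proof goes in the \emph{opposite} direction: it shows that any beneficial manipulation whose drop-set \emph{contains} the least-preferred match $s_l=m_{q(c)}$ is equivalent to the manipulation with drop-set $R\setminus\{s_l\}$, because by \lemref{npnotbeneficial} a beneficial run makes no new proposals, hence never proposes into the dropped block at all, so $s_l$'s position in the list is immaterial; consequently it suffices to test the non-empty subsets of $\mu(c)\setminus\{s_l\}$, i.e.\ the sets that \emph{exclude} $s_l$. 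You claim instead that the sets \emph{including} $s_l$ suffice, and the surviving-$m_{q(c)}$ case of your first reduction --- which you correctly flag as the crux --- is exactly where this breaks. The ``fall back on $m_{q(c)}$ at the very bottom of its list'' argument fails because under \collegeproposing DAA the college proposes in list order: before it could ever fall back on $s_l$ it must first propose to every student lying between $s_l$'s truthful position and the bottom of the list; these are new proposals, any of them may accept and permanently occupy the freed slot, and by \lemref{npnotbeneficial} a run containing a new proposal is never beneficial. So whenever $c$ actually proposes to $s_l$ in the given beneficial run, adding $s_l$ to the drop-set destroys the manipulation rather than preserving it.

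A concrete counterexample: students $\{s^*,x,r,y_1,y_2\}$, colleges $c$ with $q(c)=2$ and $c''$ with $q(c'')=1$; preferences $s^*>_c x>_c r>_c y_1>_c y_2$, \ $x>_{c''}s^*>_{c''}r>_{c''}y_1>_{c''}y_2$, \ $c''>_{s^*}c$, and $c$ ranked first by $x,r,y_1,y_2$. Truthfully, $c$ proposes to $s^*,x$; $x$ rejects $c''$; $c''$ then proposes to $s^*$, who rejects $c$; $c$ falls back on $r$. Hence $\mu(c)=\{x,r\}$ and $s_l=r$. Dropping $R=\{x\}$ --- a set your family never tests --- yields $\mu_R(c)=\{s^*,r\}$, a beneficial manipulation. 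Your family tests only $R=\{r\}$, which yields $\{x,y_1\}$ (worse: the forced new proposal to $y_1$ sticks), and the excluded all-drop set $\{x,r\}$ yields $\{s^*,y_1\}$, which is incomparable to $\{x,r\}$ under responsive dominance and hence not beneficial by the paper's convention. So your procedure would report this manipulable instance as non-manipulable. (Your non-surviving case is essentially sound, though even there ``the run is unchanged'' needs the argument that $c$ never proposed to $s_l$ in the given run, which follows from the no-new-proposals lemma via a monotonicity argument.) The fix is to flip the reduction: the least-preferred match is the one student that should \emph{never} be dropped, and the sets worth testing are the non-empty $R\subseteq\mu(c)\setminus\{s_l\}$.
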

\begin{proof}
Let $S, C$ be the agents of a many-to-one matching market and $\mu$ the matching that is obtained by applying DAA to this market. According to Lemma \ref{npnotbeneficial}, a college can only beneficially manipulate DAA by not proposing to student(s) it is matched with under $\mu$. Further, by Lemma \ref{onlyfullcapacity}, only colleges $c \in C$ with $|\mu(c)|=q(c)>1$ are able to achieve a beneficial manipulation.

Let $c \in C$ s.t. $|\mu(c)|=q(c)>1$, $s_l$ be the least preferred, according to $>_c$, student among $\mu(c)$ and $R \subseteq \mu(c)$ s.t. $s_l \in R$. Assume that $c$ can beneficially manipulate DAA by misrepresenting its preferences s.t. all students $s \in S \setminus R$ are listed according to the truthful preferences, followed by all students $r \in R$ in arbitrary order. According to Lemma \ref{npnotbeneficial}, $c$ does not propose to any student $u$ s.t. $s_l >_c u$ during the manipulated execution of DAA. Therefore the same beneficial manipulation can be achieved by $c$ misrepresenting its preferences s.t. all students $s \in \{s_l\} \cup S \setminus R$ are listed according to the truthful preferences, followed by all students $r \in R \setminus \{s_l\}$ in arbitrary order.
A beneficial manipulation for $c$, if one exists, can therefore be found by picking non-empty subsets $R \subseteq \mu(c) \setminus \{s_l\}$ and computing the outcome of DAA if $c$ states its preferences s.t. all students $s \in S \setminus R$ are listed according to the truthful preferences, followed by students $r \in R$ in arbitrary order. Since $|\mu(c) \setminus \{s_l\}|=q(c)-1$, DAA must be executed for at most $2^{q(c)-1}-1$ sets $R$.
\end{proof}

 
Since every possible manipulation of a college~$c$ under \collegeproposing DAA corresponds to a non-empty subset~$R\subset \mu(c)$, we can find an \emph{optimal} manipulation by checking every possible subset. Just as in the proof of \thmref{thm:findingmanip}, this implies a bound of $2^{q(c)-1}-1$ on the number of DAA executions. 

\begin{corollary}
	\label{cor:colprop-opt-fpt}
	Finding a college's~$c$ optimal manipulation by misrepresenting preferences under \studentproposing DAA is fixed-parameter tractable w.r.t. the parameter~$q(c)$.
\end{corollary}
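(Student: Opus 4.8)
The plan is to obtain the claim directly from the stronger polynomial-time bound already available for \studentproposing DAA, rather than from the parameterized bound of \thmref{thm:findingmanip}. The key point is that for \studentproposing DAA the manipulation problem is not merely fixed-parameter tractable but outright polynomial: by \thmref{thm:o2mInP} one can decide in polynomial time whether a given college $c$ has any beneficial misrepresentation while all other agents report truthfully.

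First I would upgrade this decision procedure into one that computes an \emph{optimal} manipulation, along the lines sketched in the paragraph following \thmref{thm:o2mInP}. Starting from $c$'s truthful report, I repeatedly invoke the polynomial-time routine of \thmref{thm:o2mInP} to test for a beneficial misrepresentation relative to $c$'s current outcome, adopting each one found as the new baseline and searching again. Let the \emph{cost} of $c$'s assigned set be the sum over its members of their positions in $c$'s true preference order. Each improving step replaces $c$'s set by one that strictly dominates it under the responsive relation, which strictly lowers this cost by at least one; since the cost is a positive integer at most $q(c)\cdot|S|\le|S|^2$, the loop halts after polynomially many iterations, each running in polynomial time. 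Hence an optimal manipulation under \studentproposing DAA is found in total polynomial time.

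Next I would record the elementary but decisive observation that any problem solvable in polynomial time in the full input size is fixed-parameter tractable with respect to \emph{every} parameterization: an FPT algorithm need only run in time $f(k)\cdot\mathrm{poly}(n)$, and a polynomial-time algorithm already meets this with the constant function $f\equiv 1$. Instantiating $k=q(c)$ yields exactly the asserted fixed-parameter tractability w.r.t. $q(c)$.

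There is no genuinely hard step here, as the conclusion is a strict weakening of a polynomial-time bound. The one place that warrants a short argument is the termination bound of the optimal-manipulation loop, namely that the strictly dominating outcomes form a chain of polynomial length; this follows from the monotone decrease of the cost noted above. Given that, fixed-parameter tractability w.r.t. $q(c)$ is automatic, and in fact the running time does not depend on $q(c)$ at all, so the parameter could be taken as coarse as one pleases.
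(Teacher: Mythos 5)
Your argument proves the statement as literally printed, but not the statement the paper intends, and the two diverge in an essential way. The word ``\studentproposing'' in the corollary is almost certainly a typo: the corollary's label and its placement immediately after \thmref{thm:findingmanip}, in the section on manipulating \collegeproposing DAA, together with the lead-in sentence (``Since every possible manipulation of a college~$c$ under \collegeproposing DAA corresponds to a non-empty subset~$R\subseteq\mu(c)$\dots''), show that the intended claim concerns \collegeproposing DAA. (The paper makes the mirror-image slip before the corollary following \thmref{thm:o2mInP}, whose lead-in says ``\collegeproposing'' while the corollary itself correctly says ``\studentproposing''.) The paper's own proof is accordingly the exhaustive one: by \lemref{npnotbeneficial} new proposals never help, so by the argument of \thmref{thm:findingmanip} every manipulation is equivalent to withholding proposals from some non-empty $R\subseteq\mu(c)\setminus\{s_l\}$; enumerating all such $R$, running DAA once per set, and keeping a best outcome under the responsive relation costs at most $2^{q(c)-1}-1$ DAA executions, which is an intrinsically fixed-parameter bound in $q(c)$. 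For \studentproposing DAA your reading would render the corollary vacuous, since the paper already states the strictly stronger polynomial-time corollary right after \thmref{thm:o2mInP} --- and your own closing observation that ``the parameter could be taken as coarse as one pleases'' is the tell-tale sign that you have proved a weakening of an existing result rather than the result at hand.

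Judged against the intended statement, your approach fails at its first step: \thmref{thm:o2mInP} and all of its supporting machinery (\thmref{rejectJustOne}, the one-to-one reduction of \lemref{one2oneReduction}, and the algorithm of Teo et al.) treat the college as a \emph{proposee} under \studentproposing DAA and give nothing for the proposing side; no polynomial-time algorithm for \collegeproposing manipulation is established in the paper, which is exactly why the result there is only FPT. That said, the parts of your write-up that stand on their own are fine: the observation that polynomial time implies fixed-parameter tractability for every parameterization with $f\equiv 1$ is correct, and your termination argument for the iterative-improvement loop --- each responsively dominating replacement strictly decreases the rank-sum of $\mu'(c)$ under $c$'s true order, which is a positive integer at most $q(c)\cdot|S|$ --- is a clean way of making precise the paper's unproved remark that the number of successive improvements is polynomial (modulo the further wrinkle, which the paper also glosses over, that each iteration must decide improvability relative to the \emph{current} baseline outcome rather than relative to $\mu$, a slight strengthening of the decision problem of \thmref{thm:o2mInP}). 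To repair the proof of the corollary as intended, replace the P-implies-FPT shortcut with the subset-enumeration argument above.
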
 




\section{Experimental Results}
\label{section:simulations}
%
%

We performed a series of simulations on randomly generated markets to estimate the fraction of instances of matching markets that can be beneficially manipulated by a single college if all other agents state their preferences truthfully.
%
DAA was first executed under the truthful preferences once, in order to obtain the resulting matching $\mu$. For each college $c \in C$ s.t. $|\mu(c)|=q(c)>1$ we then computed all non-empty subsets $R \subseteq \mu(c)\setminus\{s_l\}$, where $s_l$ is the least preferred, according to $>_c$, student among $\mu(c)$.  Subsequently, DAA was executed with $c$ misrepresenting its preferences s.t. all students $s \in S \setminus R$ are listed according to the truthful preferences, followed by all students $r \in R$ for different sets $R$, until either a beneficial manipulation by $c$ was found or all sets $R$ were checked in this manner.
In our simulations, the fraction of manipulable instances of matching markets lay between 5.7\% and 81.2\% for \studentproposing (36.53\% on average) and between 13.6\% and 100\% for \collegeproposing DAA (69.93\% on average). These experimental results suggest that the fraction of matching markets that can be beneficially manipulated is significant, independently of whether the student-proposing or the college-proposing version of DAA is used.

Preferences of both students and colleges were generated using the PrefLib tool by \citeauthor{MaWa13a}. The tool enables the generation of preference data using different methods, among which we chose those of \emph{impartial cultures} (IC) and \emph{Mallows mixtures} (MM) \citep{Mall57a}. Under IC, each (complete) preference list has the same probability, i.e. preferences are generated \emph{uniformly at random}. This is a standard method for generating random preference profiles for simulations, which was for example also used by \citet{TST01a} as well as by \citet{RoPe99a}. 
Preference profile generation using MM is parametrized by a set of \emph{reference rankings} $\mathcal{R}$ and a \emph{noise parameter} $\phi \in [0,1]$. 
We consider markets with $100$ to $200$ students and $15$ to $30$ colleges.
These numbers were chosen to allow us to compare the degree of manipulability of markets with different ratios of $|S|$ to $|C|$, while keeping the computational effort involved within reasonable limits. 
For each of these markets, we generated $1000$ sets of preference profiles each using IC as well as MM with a number of different parameter values. The capacities of colleges were generated using the following two methods. 
\begin{enumerate}[noitemsep,topsep=0pt,parsep=0pt,partopsep=0pt]
\item Capacities were generated uniformly at random between $1$ and $\left \lceil \frac{|S|}{|C|} \right \rceil$.
\item Capacities were initialized using method 1. While $\sum_{c \in C} q(c) < |S|$, a college was selected uniformly at random and its capacity increased by $1$.
\end{enumerate}
We applied both methods to each generated preference profile. This allows us to compare the manipulability of markets in which it is likely that some students remain unmatched (method 1) to those in which each student is guaranteed to be admitted to some college (method 2).
To allow for a better comparison of the degree of manipulability by colleges in \studentproposing and \collegeproposing DAA, the same preference profiles were used to simulate DAA in both kinds of markets.
In all simulations, the percentage of instances that can be beneficially manipulated is considerably higher under college-proposing DAA than under student-proposing DAA. On average 33.42\%, at least 7\% and up to 70.7\% more instances are manipulable under college-proposing DAA than under student-proposing DAA. Instances that are manipulable by at least one college, however, can on average be beneficially manipulated by less colleges under college-proposing DAA (9.07\% on average) than under student-proposing DAA (16.61\% on average). 

\section{Summary and Future Work}
\label{section:summary}


In this paper, we examine two issues regarding colleges manipulating DAA: frequency of manipulable instances and the complexity of manipulation. 
We investigated both issues w.r.t. manipulation by colleges in many-to-one markets, in which participants are required to submit \emph{complete preference lists}. 
We summarize our contributions in the context of previously existing results in Table~\ref{complexitysummary}.
\begin{table}[htb]
			\scalebox{1}{
\begin{tabular}{p{2.8cm} l l}
	\toprule
DAA	variant			& proposer		& proposee 				\\[.1cm] \midrule
one-to-one		& strategyproof	& in P (\citet{TST01a})	\\[.1cm]
many-to-one 	& \multirow{2}{*}{strategyproof}	& \multirow{2}{*}{in \textbf{P} (Thm. \ref{thm:o2mInP})}	\\
(student-proposing) & & \\[.1cm]
many-to-one 	& \multirow{2}{*}{\textbf{FPT} (Thm. \ref{thm:findingmanip})}	& \multirow{2}{*}{in P (from one-to-one)}	\\
(college-proposing) & & \\[.1cm]
many-to-many	& \textbf{FPT} (from Thm. \ref{thm:findingmanip})	& in \textbf{P} (from Thm. \ref{thm:o2mInP})\\
\bottomrule
\end{tabular}
}
\caption{Computational complexity of manipulating DAA by \emph{misrepresenting preferences} in markets with complete preference lists, strict and responsive preferences and fixed capacities. The results in bold are from this paper.}
\label{complexitysummary}
\end{table}
 Based on our experimental results on markets, we concluded that a significant fraction of instances of matching markets can be beneficially manipulated by at least one college, \emph{independently} of whether student-proposing or college-proposing DAA is used. 
 Our results indicate that \collegeproposing DAA is significantly \emph{more likely} to be beneficially manipulable. However, the average percentage of colleges that could benefit from a manipulation, given that the market is manipulable by at least one college, was \emph{considerably lower} under \collegeproposing DAA.
Another interesting result concerns the capacities of colleges.
In all our simulations, markets in which the capacities of colleges were chosen such that every student is guaranteed to be admitted to some college were considerably more prone to manipulation than those in which some students remained unmatched. 
A  natural extension of our work is to investigate whether these results transfer to matching markets in which agents are allowed to submit preference lists of \emph{arbitrary length}.
So far, no complexity results regarding manipulation by colleges in these kinds of markets exist. 


		\subsection*{Acknowledgments}
		NICTA is funded by the Australian Government through the Department of Communications and the Australian Research Council through the ICT Centre of Excellence Program.\\
		


\begin{thebibliography}{22}
		\providecommand{\natexlab}[1]{#1}
		\providecommand{\url}[1]{\texttt{#1}}
		\expandafter\ifx\csname urlstyle\endcsname\relax
		  \providecommand{\doi}[1]{doi: #1}\else
		  \providecommand{\doi}{doi: \begingroup \urlstyle{rm}\Url}\fi

		\bibitem[Akahoshi(2014)]{Akah14a}
		T.~Akahoshi.
		\newblock A necessary and sufficient condition for stable matching rules to be
		  strategy-proof.
		\newblock \emph{Social Choice and Welfare}, 43\penalty0 (3):\penalty0 683--702,
		  2014.

		\bibitem[Alcalde and Barber{\`a}(1994)]{AlBa94a}
		J.~Alcalde and S.~Barber{\`a}.
		\newblock Top dominance and the possibility of strategy-proof stable solutions
		  to matching problems.
		\newblock \emph{Economic Theory}, 4\penalty0 (3):\penalty0 417--435, 1994.

		\bibitem[Dubins and Freedman(1981)]{DuFr81a}
		L.~E. Dubins and D.~A. Freedman.
		\newblock Machiavelli and the gale-shapley algorithm.
		\newblock \emph{The American Mathematical Monthly}, 88\penalty0 (7):\penalty0
		  485--494, 1981.

		\bibitem[Faliszewski and Procaccia(2010)]{FaPr10a}
		P.~Faliszewski and A.~D. Procaccia.
		\newblock {AI}'s war on manipulation: {A}re we winning?
		\newblock \emph{AI Magazine}, 31\penalty0 (4):\penalty0 53--64, 2010.

		\bibitem[Faliszewski et~al.(2010)Faliszewski, Hemaspaandra, and
		  Hemaspaandra]{FHH10a}
		P.~Faliszewski, E.~Hemaspaandra, and L.~Hemaspaandra.
		\newblock Using complexity to protect elections.
		\newblock \emph{Communications of the ACM}, 53\penalty0 (11):\penalty0 74--82,
		  2010.

		\bibitem[Gale and Shapley(1962)]{GaSh62a}
		D.~Gale and L.~S. Shapley.
		\newblock College admissions and the stability of marriage.
		\newblock \emph{The American Mathematical Monthly}, 69\penalty0 (1):\penalty0
		  9--15, 1962.

		\bibitem[Immorlica and Mahdian(2005)]{ImMa05a}
		N.~Immorlica and M.~Mahdian.
		\newblock Marriage, honesty, and stability.
		\newblock In \emph{Proceedings of the 16th Annual ACM-SIAM Symposium on
		  Discrete Algorithms (SODA)}, pages 53--62. SIAM, 2005.

		\bibitem[Jaramillo et~al.(2014)Jaramillo, Kay{\i}, and and]{JKK12a}
		P.~Jaramillo, C.~Kay{\i}, and F.~K. and.
		\newblock Exhaustiveness of truncation and dropping strategies in many-to-many
		  matching markets.
		\newblock \emph{Social Choice and Welfare}, 42\penalty0 (4):\penalty0 793--811,
		  2014.

		\bibitem[Kesten(2011)]{Kest11a}
		O.~Kesten.
		\newblock On two kinds of manipulation for school choice problems.
		\newblock \emph{Economic Theory}, 2011.

		\bibitem[Kobayashi and Matsui(2010)]{KoMa10b}
		H.~Kobayashi and T.~Matsui.
		\newblock Cheating strategies for the {G}ale-{S}hapley algorithm with complete
		  preference lists.
		\newblock \emph{Algorithmica}, 58\penalty0 (1):\penalty0 151--169, 2010.

		\bibitem[Kojima(2007)]{Koji07a}
		F.~Kojima.
		\newblock When can manipulations be avoided in two-sided matching markets? --
		  maximal domain results.
		\newblock \emph{The B.E. Journal of Theoretical Economics}, 7\penalty0 (1),
		  2007.

		\bibitem[Kojima and Pathak(2009)]{KoPa09a}
		F.~Kojima and P.~Pathak.
		\newblock Incentives and stability in large two-sided matching markets.
		\newblock \emph{American Economic Review}, 99\penalty0 (3):\penalty0 608--627,
		  2009.

		\bibitem[Mallows(1957)]{Mall57a}
		C.~L. Mallows.
		\newblock Non-null ranking models.
		\newblock \emph{Biometrika}, 44\penalty0 (1/2):\penalty0 114--130, 1957.

		\bibitem[Masarani and Gokturk(1989)]{MaGo89a}
		F.~Masarani and S.~S. Gokturk.
		\newblock On the existence of fair matching algorithms.
		\newblock \emph{Theory and Decision}, 26\penalty0 (3):\penalty0 305--322, 1989.

		\bibitem[Mattei and Walsh(2013)]{MaWa13a}
		N.~Mattei and T.~Walsh.
		\newblock Pref{L}ib: A library for preference data http://www.preflib.org.
		\newblock In \emph{Proceedings of the 3rd International Conference on
		  Algorithmic Decision Theory (ADT)}, volume 8176 of \emph{Lecture Notes in
		  Computer Science (LNCS)}, pages 259--270. Springer, 2013.

		\bibitem[Roth(1982)]{Roth82a}
		A.~E. Roth.
		\newblock The economics of matching: {S}tability and incentives.
		\newblock \emph{Mathematics of Operations Research}, 7\penalty0 (4):\penalty0
		  617--628, 1982.

		\bibitem[Roth(1984)]{Roth84b}
		A.~E. Roth.
		\newblock The evolution of the labor market for medical interns and resident: a
		  case study in game theory.
		\newblock \emph{The Journal of Political Economy}, 92:\penalty0 991--1016,
		  1984.

		\bibitem[Roth and Peranson(1999)]{RoPe99a}
		A.~E. Roth and E.~Peranson.
		\newblock The redesign of the matching market for american physicians: {S}ome
		  engineering aspects of economic design.
		\newblock \emph{American Economic Review}, 89:\penalty0 748--780, 1999.

		\bibitem[Roth and Sotomayor(1990)]{RoSo90a}
		A.~E. Roth and M.~A.~O. Sotomayor.
		\newblock \emph{Two-Sided Matching: {A} Study in Game Theoretic Modelling and
		  Analysis}.
		\newblock Cambridge University Press, 1990.

		\bibitem[S{\"o}nmez(1997)]{Sonm97a}
		T.~S{\"o}nmez.
		\newblock Manipulation via capacities in two-sided matching markets.
		\newblock \emph{Journal of Economic Theory}, 77\penalty0 (1):\penalty0
		  197--204, 1997.

		\bibitem[Sukegawa and Yamamoto(2011)]{SuYa11a}
		N.~Sukegawa and Y.~Yamamoto.
		\newblock Preference profiles determining the proposals in the {G}ale-{S}hapley
		  algorithm for stable matching problems.
		\newblock Technical report, University of Tsukuba, 2011.

		\bibitem[Teo et~al.(2001)Teo, Sethuraman, and Tan]{TST01a}
		C.-P. Teo, J.~Sethuraman, and W.-P. Tan.
		\newblock Gale-shapley stable marriage problem revisited: Strategic issues and
		  applications.
		\newblock \emph{Management Science}, 49\penalty0 (9):\penalty0 1252---1267,
		  2001.

		\end{thebibliography}

\end{document}